%% 
%% This is file `sample-sigplan.tex',
%% generated with the docstrip utility.
%%
%% The original source files were:
%%
%% samples.dtx  (with options: `all,proceedings,bibtex,sigplan')
%% 
%% IMPORTANT NOTICE:
%% 
%% For the copyright see the source file.
%% 
%% Any modified versions of this file must be renamed
%% with new filenames distinct from sample-sigplan.tex.
%% 
%% For distribution of the original source see the terms
%% for copying and modification in the file samples.dtx.
%% 
%% This generated file may be distributed as long as the
%% original source files, as listed above, are part of the
%% same distribution. (The sources need not necessarily be
%% in the same archive or directory.)
%%
%%
%% Commands for TeXCount
%TC:macro \cite [option:text,text]
%TC:macro \citep [option:text,text]
%TC:macro \citet [option:text,text]
%TC:envir table 0 1
%TC:envir table* 0 1
%TC:envir tabular [ignore] word
%TC:envir displaymath 0 word
%TC:envir math 0 word
%TC:envir comment 0 0
%%
%% The first command in your LaTeX source must be the \documentclass
%% command.
%%
%% For submission and review of your manuscript please change the
%% command to \documentclass[manuscript, screen, review]{acmart}.
%%
%% When submitting camera ready or to TAPS, please change the command
%% to \documentclass[sigconf]{acmart} or whichever template is required
%% for your publication.
%%
%%
\documentclass[sigplan,screen]{acmart}

\usepackage{microtype}

\usepackage{amsmath}
\usepackage{amsthm}
\usepackage{cmll}
\usepackage{braket}
\usepackage{proof}
\usepackage[inline,shortlabels]{enumitem}
\usepackage{listings}
\usepackage{cleveref}
\usepackage{preamble}

\newcommand{\basicstylesize}{\small}
\newcommand{\basicstylespread}{\linespread{0.8}}
\newcommand{\basicstylefamily}{\ttfamily}
\newcommand{\basicstyle}[1][]{\basicstylespread\basicstylesize\basicstylefamily #1}
\lstset{basicstyle=\basicstyle,columns=fixed}

%% Rights management information.  This information is sent to you
%% when you complete the rights form.  These commands have SAMPLE
%% values in them; it is your responsibility as an author to replace
%% the commands and values with those provided to you when you
%% complete the rights form.
% \setcopyright{acmlicensed}
% \copyrightyear{2025}
% \acmYear{2025}
% \acmDOI{XXXXXXX.XXXXXXX}
%% These commands are for a PROCEEDINGS abstract or paper.
% \acmConference[OlivierFest '25]{OlivierFest}{October 14--15, 2025}{Singapore}
%%
%%  Uncomment \acmBooktitle if the title of the proceedings is different
%%  from ``Proceedings of ...''!
%%
%%\acmBooktitle{Woodstock '18: ACM Symposium on Neural Gaze Detection,
%%  June 03--05, 2018, Woodstock, NY}
% \acmISBN{978-1-4503-XXXX-X/2018/06}

%% Copyright & metadata from publisher
\setcopyright{cc}
\setcctype{by}
\acmDOI{10.1145/3759427.3760362}
\acmYear{2025}
\copyrightyear{2025}
\acmISBN{979-8-4007-2150-2/25/10}
\acmConference[OLIVIERFEST '25]{Proceedings of the Workshop Dedicated to Olivier Danvy on the Occasion of His 64th Birthday}{October 12--18, 2025}{Singapore, Singapore}
\acmBooktitle{Proceedings of the Workshop Dedicated to Olivier Danvy on the Occasion of His 64th Birthday (OLIVIERFEST '25), October 12--18, 2025, Singapore, Singapore}
\received{2025-06-10}
\received[accepted]{2025-07-31}

%%
%% Submission ID.
%% Use this when submitting an article to a sponsored event. You'll
%% receive a unique submission ID from the organizers
%% of the event, and this ID should be used as the parameter to this command.
%%\acmSubmissionID{123-A56-BU3}

%%
%% For managing citations, it is recommended to use bibliography
%% files in BibTeX format.
%%
%% You can then either use BibTeX with the ACM-Reference-Format style,
%% or BibLaTeX with the acmnumeric or acmauthoryear sytles, that include
%% support for advanced citation of software artefact from the
%% biblatex-software package, also separately available on CTAN.
%%
%% Look at the sample-*-biblatex.tex files for templates showcasing
%% the biblatex styles.
%%

%%
%% The majority of ACM publications use numbered citations and
%% references.  The command \citestyle{authoryear} switches to the
%% "author year" style.
%%
%% If you are preparing content for an event
%% sponsored by ACM SIGGRAPH, you must use the "author year" style of
%% citations and references.
%% Uncommenting
%% the next command will enable that style.
%%\citestyle{acmauthoryear}

%%
%% end of the preamble, start of the body of the document source.
\begin{document}

%%
%% The "title" command has an optional parameter,
%% allowing the author to define a "short title" to be used in page headers.
\title{Controlling Copatterns: There and Back Again}
\subtitle{Extended Version}

%%
%% The "author" command and its associated commands are used to define
%% the authors and their affiliations.
%% Of note is the shared affiliation of the first two authors, and the
%% "authornote" and "authornotemark" commands
%% used to denote shared contribution to the research.
\author{Paul Downen}
\orcid{0000-0003-0165-9387}
\affiliation{%
  \institution{University of Massachusetts Lowell}
  \city{Lowell}
  \country{USA}
}
\email{Paul_Downen@uml.edu}

%%
%% By default, the full list of authors will be used in the page
%% headers. Often, this list is too long, and will overlap
%% other information printed in the page headers. This command allows
%% the author to define a more concise list
%% of authors' names for this purpose.
\renewcommand{\shortauthors}{Downen}

%%
%% The abstract is a short summary of the work to be presented in the
%% article.
\begin{abstract}
  Copatterns give functional programs a flexible mechanism for responding to
  their context, and composition can greatly enhance their expressiveness.
  However, that same expressive power makes it harder to precisely specify the
  behavior of programs.  Using Danvy's functional and syntactic correspondence
  between different semantic artifacts, we derive a full suite of semantics for
  copatterns, twice.  First, a calculus of monolithic copatterns is taken on a
  journey from small-step operational semantics to abstract machine to
  continuation-passing style.  Then within continuation-passing style, we
  refactor the semantics to derive a more general calculus of compositional
  copatterns, and take the return journey back to derive the other semantic
  artifacts in reverse order.
\end{abstract}

%%
%% The code below is generated by the tool at http://dl.acm.org/ccs.cfm.
%% Please copy and paste the code instead of the example below.
%%
\begin{CCSXML}
<ccs2012>
   <concept>
       <concept_id>10011007.10011006.10011008.10011024.10011036</concept_id>
       <concept_desc>Software and its engineering~Patterns</concept_desc>
       <concept_significance>500</concept_significance>
       </concept>
   <concept>
       <concept_idf>10003752.10010124.10010125.10010126</concept_id>
       <concept_desc>Theory of computation~Control primitives</concept_desc>
       <concept_significance>500</concept_significance>
       </concept>
   <concept>
       <concept_id>10011007.10010940.10010941.10010942.10010943</concept_id>
       <concept_desc>Software and its engineering~Interpreters</concept_desc>
       <concept_significance>500</concept_significance>
       </concept>
   <concept>
       <concept_id>10003752.10010124.10010131.10010134</concept_id>
       <concept_desc>Theory of computation~Operational semantics</concept_desc>
       <concept_significance>500</concept_significance>
       </concept>
   <concept>
       <concept_id>10003752.10003753.10010622</concept_id>
       <concept_desc>Theory of computation~Abstract machines</concept_desc>
       <concept_significance>500</concept_significance>
       </concept>
 </ccs2012>
\end{CCSXML}

\ccsdesc[500]{Software and its engineering~Patterns}
\ccsdesc[500]{Software and its engineering~Interpreters}
% \ccsdesc[500]{Software and its engineering~Semantics}
\ccsdesc[500]{Theory of computation~Control primitives}
\ccsdesc[500]{Theory of computation~Operational semantics}
\ccsdesc[500]{Theory of computation~Abstract machines}

%%
%% Keywords. The author(s) should pick words that accurately describe
%% the work being presented. Separate the keywords with commas.
\keywords{Copatterns, Delimited Control, Handlers, Functional Correspondence, Syntactic Correspondence}
%% A "teaser" image appears between the author and affiliation
%% information and the body of the document, and typically spans the
%% page.
% \begin{teaserfigure}
%   \includegraphics[width=\textwidth]{sampleteaser}
%   \caption{Seattle Mariners at Spring Training, 2010.}
%   \Description{Enjoying the baseball game from the third-base
%   seats. Ichiro Suzuki preparing to bat.}
%   \label{fig:teaser}
% \end{teaserfigure}

% \received{10 June 2025}
% \received[revised]{16 June 2025}
% \received[accepted]{14 August 2025}

%%
%% This command processes the author and affiliation and title
%% information and builds the first part of the formatted document.
\maketitle

\theoremstyle{acmdefinition}
\newtheorem{remark}[theorem]{Remark}

\crefname{section}{section}{sections}
\Crefname{section}{Section}{Sections}
\crefname{figure}{figure}{figures}
\Crefname{figure}{Figure}{Figures}
\crefname{equation}{equation}{equations}
\Crefname{equation}{Equation}{Equations}

\crefname{theorem}{theorem}{theorems}
\Crefname{theorem}{Theorem}{Theorems}
\crefname{lemma}{lemma}{lemmas}
\Crefname{lemma}{Lemma}{Lemmas}
\crefname{corollary}{corollary}{corollaries}
\Crefname{corollary}{Corollary}{Corollaries}

\crefname{example}{example}{examples}
\crefname{example}{Example}{Examples}
\crefname{remark}{remark}{remarks}
\Crefname{remark}{Remark}{Remarks}

\section{Introduction}
\label{sec:intro}

Pattern matching---a common feature among functional languages that expresses
complex traversals over trees---can be made even more powerful through a modern
extension: \emph{copatterns} \cite{Copatterns}.  The dual to patterns,
copatterns let multi-clause definitions match over more information in their
calling context, reacting to the structure of projections in addition to the
structure of arguments.  In contrast to Haskell-style lazy data structures,
copatterns are especially useful for modeling and manipulating infinite
information like streams in settings that are more sensitive to termination like
proof assistants and eager languages.

\emph{Compositional copatterns}---as recently implemented as a Scheme macro
library \cite{CoScheme}---extend copatterns with new ways to combine and apply
partially-defined, context-sensitive code using a fusion of functional and
object-oriented techniques.  For example, a (lazy) pair can be represented in
Scheme as procedure that takes \scm|`fst| or \scm|`snd| as an argument and
returns the respective element.  The two-by-two nested pair $((1, 2), (3, 4))$
can then be defined through multiple equations describing chains of projection
like so:
\begin{scheme}
(define*
  [((quad `fst) `fst) = 1]
  [((quad `fst) `snd) = 2]
  [((quad `snd) `fst) = 3]
  [((quad `snd) `snd) = 4])
\end{scheme}

Suppose we want to override the diagonal elements---the first of the first,
and the second of the second---with new values.  The function
\scm|(diag x y z)| below returns \scm|x| as the very first, \scm|y| as the very
last, and is the same as \scm|z| elsewhere
\begin{scheme}
(define*
  [(((diag x y z) `fst) `fst) = x]
  [(((diag x y z) `snd) `snd) = y]
  [  (diag x y z)             = z])
\end{scheme}
so that \scm|(diag 50 60 quad)| represents $((50, 2), (3, 60))$.  Notice the use
of the third ``fall-through'' case taken whenever the first two cases don't
apply.  Crucially, the context this fall-through case matches is \emph{less
  specified} than the earlier cases, so \scm|(((diag 50 60 quad)   `snd)  `fst)|
should simplify to \scm|((quad `snd)  `fst)  = 3|.  Effectively, the above
3-clause definition of \scm|diag| on two-by-two pairs is short-hand for the
following expansion that manually elaborates all four cases:
\begin{scheme}[basicstyle=\basicstyle\footnotesize]
(define*
  [(((diag x y z) `fst) `fst) = x]
  [(((diag x y z) `snd) `snd) = y]
  [(((diag x y z) `fst) `snd) = ((z `fst) `snd)]
  [(((diag x y z) `snd) `fst) = ((z `snd) `fst)])
\end{scheme}

As the notation suggests, we can understand this code through equational
reasoning---replacing call sites matching the left-hand side with the right-hand
side.  However, it is not so obvious how to convert this into an operational
semantics capable of directly calculating the result of any source-level
program, without elaborations like the above.  This becomes even more
challenging when taking into account other forms of higher-order composition
that can be performed at run-time: multiple clauses can be stitched together
vertically to handle undefined cases, and individual clauses can be extended
horizontally to consider more context or side conditions before committing to a
right-hand side.  The implementation \cite{CoScheme} gives a semantics based on
macro expansion into a small subset of Scheme (\ie $\lambda$-calculus), but only
describes the semantics via translation to the target language and not in terms
of the source-level language itself.

Thankfully, there is a general-purpose technique for converting one form of
semantics into another!  Danvy's functional correspondence
\cite{FunctionalCorrespondence} and syntactic correspondence
\cite{SyntacticCorrespondence} show how to use off-the-shelf program
transformations to derive semantic artifacts---operational semantics, abstract
machines, and continuation-passing style transformations---from one another in a
way that is correct by construction.  In other words, we can (mostly
mechanically) generate functional and correct operational semantics directly
from the macro definition implementing copatterns \cite{CoScheme}.  Yet, this is
not just an exercise of turning the crank.  Doing so reveals connections between
copattern matching with delimited control and a form of first-class handlers.

Our journey today is a round-trip hike in the semantic park.  Starting from a
simplified source copattern calculus with a straightforward operational
semantics, we derive its corresponding continuation-passing style (CPS)
transformation.  Then after making a few adjustments generalizing it to match
the macro definition found in \cite{CoScheme}, we turn around and walk backwards
to derive a direct-style operational semantics for compositional copattern
matching.  More specifically, our technical contributions are as follows:
\begin{itemize}
\item (\Cref{sec:derive-copat}) We derive a trio of semantic artifacts for a
  monolithic copattern calculus---small-step operational semantics, abstract
  machine, and continuation-passing style (CPS) translation---through a
  mechanical derivation based on functional code for copattern matching on
  contexts and a search to find the next redex.
\item (\Cref{sec:refactor}) We refactor the monolithic copattern calculus into
  more compositional primitives that give first-class control over
  \begin{enumerate*}[(1)]
  \item the (delimited) calling context and
  \item the options for how to handle match failure.
  \end{enumerate*}
  The CPS translation corresponds to the macro definition in \cite{CoScheme},
  and shows that the compositional calculus is a conservative extension of the
  monolithic one.
\item (\Cref{sec:control-copat}) Going in reverse, we walk back from the CPS
  translation of compositional copatterns and derive the missing semantic
  artifacts---an abstract machine and small-step operational semantics---that
  correspond by construction to the CPS translation.
\end{itemize}
The main highlights of these derivations are shown here, illustrating the
Haskell code that corresponds to the various semantic artifacts for copattern
matching.  To find the detailed step-by-step process, which serves as their
proof of correctness, see \url{https://github.com/pdownen/derive-copat}.

Before diving into the semantics, we begin with an example that illustrates the
added expressive power that compositionality gives to copatterns in functional
code, by infusing it with some techniques from the object-oriented paradigm.

\section{Expressiveness of Copatterns}
\label{sec:example}

To illustrate the expressiveness of copatterns, let's consider how they can be
used to write an interpreter for arithmetic expressions in Scheme.%
\footnote{The library defining the Scheme and Racket macros used in these
  examples can be found at \url{https://github.com/pdownen/CoScheme}.}%
A simple way to represent expressions is as trees where the nodes start with the
(symbolic) name of an operator followed by sub-expressions, and leaves are just
number literals.  For a simple language, the trees representing addition and
multiplication follow the patterns \scm|`(add ,l ,r)| and \scm|`(mul ,l ,r)|.%
\footnote{This is \emph{quasiquote} syntax, where the backquote \scm|`...| means
  to interpret the expression literally as a data structure---leaving names as
  quoted symbols---except for internal ``unquoting'' form \scm|,l| which means
  to insert the value of \scm|l| in place.  When used as a pattern, the form
  \scm|`...| matches against a nested tree data structure of that shape, and
  \scm|,l| inside of a backquote pattern is a wildcard variable that can match
  any value, which gets bound to \scm|l|.}
The arithmetic interpreter \scm|arith| can then be defined as an object
\cite{CoScheme} by (co)pattern matching on calls of the form
\scm|(arith `eval e)| where \scm|`eval| signals the request to evaluate an
expression tree \scm|e|:
\begin{scheme}
(define-object
  [(arith `eval n) (try-if (number? n))
   = n]
  [(arith `eval `(add ,l ,r))
   = (+ (arith `eval l) (arith `eval r))]
  [(arith `eval `(mul ,l ,r))
   = (* (arith `eval l) (arith `eval r))])

(define expr0 `(add 1 (mul 2 3)))
\end{scheme}
The first clause contains a \emph{guard} \scm|try-if| used to check that the
variable \scm|n| is bound to a numeric value; if not, the next two clauses are
tried, matching the two operators.  Following these equations derives
\scm|(arith `eval expr0)   = 7|.

Now, what if we wanted to add a new expression node, such as (unary) negation
\scm|`(neg ,e)|?  We can't just define a new function which handles the
\scm|`neg| operation and calls the old one for all other cases, like this
\emph{wrong} extension of \scm|arith|:
\begin{scheme}
(define-object
  [(arith-wrong `eval `(neg ,e))
   = (- (arith-wrong `eval e))]
  [(arith-wrong `eval e) = (arith `eval e)])

(define expr1 `(add 1 (neg (mul 2 3))))
\end{scheme}
Certain special cases will work.  Passing in an expression from the old syntax
will produce the correct answer---like \scm|(arith-wrong `eval expr0)    = 7|---or
limited uses of negation % Prevent --- from overflowing text box
---like \scm|(arith-wrong `eval `(neg 5))    = -5|.
However, examples where old and new operations are nested within one another
produce an error, like \scm|(arith-wrong `eval expr1)|.

In a functional language with built-in support for algebraic data types---like
Haskell and ML-family languages---we would have to copy-and-paste the old code
and add one extra clause for the new case.  Instead, with compositional
copatterns, we can reuse the old code as-is by composing it with another object
that defines the new behavior.  A correct way to extend the arithmetic evaluator
looks like:
\begin{scheme}
(define arith-ext
  (arith `compose
   (object
    [(self `eval `(neg ,e))
     = (- (self `eval e))])))
\end{scheme}
where the tag \scm|`compose| denotes an implicit inherited method of objects
that combines the clauses vertically in an either-or fashion.  That way, it
correctly calculates answers to both old expressions
\scm|(arith-ext `eval expr0)    = 7| and new ones
\scm|(arith-ext `eval expr1)    = -5|.  Textually, it is as if the underlying
runtime system did the copy-and-pasting for us, behaving exactly like the
equivalent expanded definition:%
\footnote{
  \renewcommand{\basicstylesize}{\footnotesize}%
  We renamed the recursive evaluator to \scm|arith-ext| in each equation
  for ease of reading, including the ones that were copied from the original
  \scm|arith| definition.  This is neither necessary nor changes the semantics,
  because \scm|define*| and \scm|define-object| are based on \emph{open
    recursion}, so that the variables used internally for recursive references
  on the right-hand sides of equations are different from the variable bound for
  external use by outside code.  In fact, the recursive name used to refer back
  to the object can be different between each line, and be different from the
  top-level name.}
\begin{scheme}[basicstyle=\basicstyle\footnotesize]
(define-object
  [(arith-ext `eval n) (try-if (number? n))
   = n]
  [(arith-ext `eval `(add ,l ,r))
   = (+ (arith-ext `eval l) (arith-ext `eval r))]
  [(arith-ext `eval `(mul ,l ,r))
   = (* (arith-ext `eval l) (arith-ext `eval r))]
  [(arith-ext `eval `(neg ,e))
   = (- (arith-ext `eval e))])
\end{scheme}
So we can add as many operators as we want to the evaluator without modifying
any old code. Fantastic!

However, what if we attempted a more radical change, such as extending
arithmetic to algebraic expressions with variables in them?  When evaluating a
variable, we need access to a mapping from variables to numbers to look up its
value.  Unfortunately, just composing clauses together vertically will no longer
suffice.  Typically, we would need to thread this environment around in every
other case where it's not needed, requiring careful surgery of the old code.

Instead, we can take a page out of the object-oriented book and think about
objects that ``encapsulate'' additional information within them.  For instance,
an object might hold onto exactly the variable-to-number environment we need.
Fortunately, we already have all the tools at our disposal to get the job
done---without needing to introduce the baggage of a whole class system.
Rather, we can model lightweight ``classes'' \ala JavaScript as functions
(representing the constructor) that return an \scm|object|.  For our needs, the
simplest class of objects containing a given variable-to-number environment is
defined by a single accessor method:
\begin{scheme}
(define (with-env dict)
  (object [(_ `env) = dict]))
\end{scheme}
So that \scm|((with-env e)  `env)  = e|.  This internal functionality can be
composed with our existing arithmetic evaluator and a new clause to \scm|`eval|
that says how to look up a variable (represented as a plain symbol) in an
expression tree.
\begin{scheme}
(define (alg dict)
  (arith-ext `compose
   (with-env dict)
   (object
    [(self `eval x) (try-if (symbol? x))
     = (dict-ref (self `env) x)])))

(define env-xy `((x . 10) (y . 20)))
(define expr2  `(add x (neg (mul 2 y))))
\end{scheme}
The algebraic evaluator gives the same answers on all the old examples without
having to use its environment:
\begin{scheme}
((alg env-xy) `eval expr0) = 7
((alg env-xy) `eval expr1) = 5
\end{scheme}
But now, it can handle new expressions that contain symbolic variables in them,
like
\begin{scheme}
((alg env-xy) `eval expr2) = -30
\end{scheme}
which is what we would get from performing the following manual textual revision and
inlining:
\begin{scheme}
(define (alg dict)
  (object
   [(self `eval n) (try-if (number? n))
    = n]
   [(self `eval `(add ,l ,r))
    = (+ (self `eval l) (self `eval r))]
   [(self `eval `(mul ,l ,r))
    = (* (self `eval l) (self `eval r))]
   [(self `eval `(neg ,e))
    = (- (self `eval e))]
   [(self `env)
    = dict]
   [(self `eval x) (try-if (symbol? x))
    = (dict-ref (self `env) x)]))
\end{scheme}
but all without touching any old code!

\section{Deriving Copatterns: A Journey of Small Steps to the Land of Continuations}
\label{sec:derive-copat}

In order to derive a semantics that can handle the types of open recursion and
composition of partial functions from \cref{sec:example}, let's consider a small
core calculus of copattern matching in \cref{fig:block-copat-spec}.  This
calculus has bound variables ($x$), applications of arguments (as $M~N$), and
projecting a specific index tagged $X$ (as $M~X$).  The only two other features
are:
\begin{itemize}
\item \emph{Open recursion:} To simplify the formalization of open-ended,
  late-bound self-reference, we follow a model close to the Python programming
  language.  The ``self'' pointer is given explicitly as the first argument, so
  calling a self-recursive function begins with an explicit punctuation (written
  $M.$) signaling that a copy of the same function should be passed first (as
  $M~M$).  The self-duplicating $\delta$ rule captures this step.
\item \emph{Copattern $\lambda$s:} Instead of just taking a single parameter,
  $\lambda$-abstractions are built out of a sequence of \emph{options} ($O$) for
  mapping \emph{copatterns} ($L$) on the left-hand side to a new term on the
  right-hand side.  Copatterns are a sequence of distinct parameter abstractions
  ($x~L$) and checks against specific index projections ($X~L$), until the empty
  copattern ($\varepsilon$) signals that no more information is needed to decide
  to return the right-hand side.  The resolution of a copattern $\lambda$ is a
  complex process, modeled by the $\beta$ rule, that checks each copattern in
  turn against the $\lambda$'s context $C$.  If somehow the $\lambda$ appears in
  a context $C$, and one of its copatterns $L_i$ can match $C$ by substituting
  for its bound variables, then its right-hand side $M_i$ should be returned
  under the same substitution, as long as $L_i$ is the first such match.
\end{itemize}

\begin{figure}
\centering
\begin{alignat*}{2}
  \mathit{Variable} &\ni{}& x, y, z
  &::= \dots
  \\
  \mathit{Index} &\ni{}& X, Y, Z
  &::= \dots
  \\
  \mathit{Term} &\ni{}& M, N
  &::= x
  \mid M ~ N
  \mid M ~ X
  \mid M .
  \mid \lambda \{\many{O}\}
  \\
  \mathit{Option} &\ni{}& O
  &::= L \to M
  \\
  \mathit{Copat} &\ni{}& L
  &::= \varepsilon
  \mid x ~ L
  \mid X ~ L
\end{alignat*}
\begin{align*}
  \begin{aligned}
    (\delta)&&
    M. &= M ~ M
    \\
    (\beta)&&
    C[\lambda\{\many[1 \leq i \leq n]{L_i \to M_i}\}]
    &=
    M_j\vect{\subst{x}{N}}
  \end{aligned}
  \\
  \left(
    \begin{alignedat}{2}
      &\text{if } &C &= L_j\vect{\subst{x}{N}}
      \\
      &\text{and } &\forall i < j,~
      \not\exists \vect{N},~ C &= L_i\vect{\subst{x}{N}}
    \end{alignedat}
  \right)
\end{align*}
\caption{Equational specification of monolithic copatterns.}
\label{fig:block-copat-spec}
\end{figure}

\begin{example}
\label{ex:count}
  
To see how this core calculus models recursion and matching, consider this
infinite stream object (with a $\mathit{Head}$ and $\mathit{Tail}$) that counts
$\mathit{From}$ an initial number:
\begin{align*}
  % \mathit{nats} &= \mathit{count}.From~0
  % \\
  \mathit{count} &=
  \lambda
  \begin{aligned}[t]
    \{&
    \mathit{self}~\mathit{From}~x~\mathit{Head} \to x
    \\
    \mid&
    \mathit{self}~\mathit{From}~x~\mathit{Tail} \to \mathit{self}.\mathit{From}(\mathit{succ}~x)
    \}
  \end{aligned}
\end{align*}
Starting the $\mathit{count}$ from 0 and accessing the third element (via two
$\mathit{Tail}$s and a $\mathit{Head}$) shows how $\delta$ and $\beta$ enable
recursion:
\begin{align*}
  % &
  % \underline{\mathit{nats}}~\mathit{Tail}~\mathit{Tail}~\mathit{Head}
  % \\
  % &=
  &
  \underline{\mathit{count}.}\mathit{From}~0~\mathit{Tail}~\mathit{Tail}~\mathit{Head}
  \\
  &=
  \underline{
    \mathit{count}~\mathit{count}~\mathit{From}~0~\mathit{Tail}
  }
  ~\mathit{Tail}~\mathit{Head}
  &(\delta)
  \\
  &=
  \underline{\mathit{count}.}\mathit{From}(\mathit{succ}~0)~\mathit{Tail}~\mathit{Head}
  &(\beta)
  \\
  &=
  \underline{
    \mathit{count}~\mathit{count}~\mathit{From}~(\mathit{succ}~0)~\mathit{Tail}
  }
  ~\mathit{Head}
  &(\delta)
  \\
  &=
  \underline{\mathit{count}.}\mathit{From}(\mathit{succ}~(\mathit{succ}~0))~\mathit{Head}
  &(\beta)
  \\
  &=
  \underline{
    \mathit{count}~\mathit{count}
    ~\mathit{From}~(\mathit{succ}~(\mathit{succ}~0))
    ~\mathit{Head}
  }
  &(\delta)
  \\
  &=
  (\mathit{succ}~(\mathit{succ}~0))
  &(\beta)
\end{align*}
\end{example}

\subsection{Small-step operational semantics}

\begin{figure}
\centering
\begin{haskell}
data Term i a
  = Var a
  | Term i a :*: Term i a
  | Term i a :@: i
  | Dot (Term i a)
  | Obj [Option i a]

data Option i a = Copattern i a :-> Term i a

data Copattern i a
  = Nop
  | a :* Copattern i a
  | i :@ Copattern i a

instance Semigroup (Copattern i a) where
  Nop      <> q' = q'
  (x :* q) <> q' = x :* (q <> q')
  (i :@ q) <> q' = i :@ (q <> q')
instance Monoid (Copattern i a) where
  mempty = Nop

type Question i a = Copattern i (Term i a)

ask :: Term i a -> Question i a -> Term i a
ask m Nop      = m
ask m (n :* q) = ask (m :*: n) q
ask m (i :@ q) = ask (m :@: i) q

type Env a b = [(a, b)]
type TermEnv i a = Env a (Term i a)
(//)::Eq a=>Term i a->TermEnv i a->Term i a
\end{haskell}
\caption{Syntax trees as an algebraic data type.}
\label{fig:block-syntax-code}
\end{figure}

\begin{figure}
\centering
\renewcommand{\basicstylesize}{\footnotesize}
\begin{haskell}
data Redex    i a
  = Introspect (Term i a)
  | Respond [Option i a]
  | FreeVar a
data Reduct   i a
  = Reduced (Term i a)
  | Unhandled
  | Unknown a
data Followup i a
  = Next (Reduct i a) (Question i a)
  | More (Copattern i a) (Term i a)
         [Option i a] (Question i a)

reduce :: (Eq i, Eq a)
       => Redex i a -> Question i a
       -> Followup i a
reduce (Introspect m) q
  = Next (Reduced $ m :*: m) q
reduce (FreeVar x) q
  = Next (Unknown x) q
reduce (Respond (lhs :-> rhs : ops)) q
  = case suffix match of
      Followup q'  -> Next (Reduced rhs') q'
      Unasked lhs' -> More lhs' rhs' ops q
      Mismatch _ _ -> reduce (Respond ops) q
  where match = comatch lhs q
        rhs'  = rhs // prefix match
reduce (Respond []) q
  = Next Unhandled q

data CoMatch   i a b
  = Comatch { prefix :: Env a b,
              suffix :: Remainder i a b }

data Remainder i a b
  = Followup (Copattern i b)
  | Unasked  (Copattern i a)
  | Mismatch (Copattern i a) (Copattern i b)

comatch :: Eq i
        => Copattern i a -> Copattern i b
        -> CoMatch i a b
comatch Nop        cxt
  = Comatch [] (Followup cxt)
comatch lhs        Nop
  = Comatch [] (Unasked lhs)
comatch (x :* lhs) (y :* cxt)
  = Comatch ((x, y) : prefix q) (suffix q)
  where q = comatch lhs cxt
comatch (i :@ lhs) (j :@ cxt)
  | i == j = comatch lhs cxt
comatch lhs        cxt
  = Comatch [] (Mismatch lhs cxt)
\end{haskell}
\caption{An implementation of small-step reduction.}
\label{fig:block-reduce-code}
\end{figure}

\begin{figure}
\centering
\begin{haskell}
data Found i a
  = Asked (Redex i a) (Question i a)

search :: Term i a -> Found i a
search (Var x)   = Asked (FreeVar x) Nop
search (Dot m)   = Asked (Introspect m) Nop
search (Obj ops) = Asked (Respond ops) Nop
search (m :*: n) = case search m of
  Asked r q -> Asked r $ q <> n :* Nop
search (m :@: i) = case search m of
  Asked r q -> Asked r $ q <> i :@ Nop
\end{haskell}
\caption{Searching for the next redex.}
\label{fig:block-search-code}
\end{figure}

\begin{figure}
\centering
\renewcommand{\basicstylesize}{\footnotesize}
\begin{haskell}
data Decomp i a = Asked (Redex i a) (Question i a)

recomp :: Term i a -> Question i a -> Term i a
recomp = ask

decomp :: Term i a -> Decomp i a
decomp m = refocus m Nop

refocus :: Term i a -> Question i a -> Decomp i a
refocus (Var x)   k = Asked (FreeVar x) k
refocus (Dot m)   k = Asked (Introspect m) k
refocus (Obj eqs) k = Asked (Respond eqs) k
refocus (m :*: n) k = refocus m $ n :* k
refocus (m :@: i) k = refocus m $ i :@ k
\end{haskell}
\caption{Decomposing a term into a redex and question.}
\label{fig:block-decomp-code}
\end{figure}

\begin{figure}
\centering
\renewcommand{\basicstylesize}{\footnotesize}
\begin{haskell}
data Answer i a
  = Under (Copattern i a) (Term i a)
          [Option i a]    (Question i a)
  | Raise (Question i a)
  | Stuck a (Question i a)

eval::(Eq a, Eq i) => Term i a -> Answer i a
eval m = iter $ decomp m

iter::(Eq a, Eq i) => Decomp i a -> Answer i a
iter (Asked r q) = case reduce r q of
  Next (Reduced m) k -> eval $ recomp m k
  Next (Unknown x) k -> Stuck x k
  Next Unhandled   k -> Raise k
  More lhs rhs eqs k -> Under lhs rhs eqs k
\end{haskell}
\caption{Functional small-step interpreter loop.}
\label{fig:block-small-step-code}
\end{figure}

The equational axioms $\beta$ and $\delta$ can specify \emph{why} an answer is
correct, but they don't give an algorithmic method showing \emph{how} to get
there.  So let's write an algorithm!

The first step is to represent syntax trees as a concrete data structure, which
we can do in Haskell as shown in \cref{fig:block-syntax-code}.  We use infix
constructors \hs|m:*:n| and \hs|m:@:x| for function application ($M~N$) and
indexing ($M~X$), respectively, and the other constructors are for variables
(\hs|Var "x"| as $x$) the ``dot'' operator (\hs|Dot m| as $M.$) and copattern
$\lambda$-objects (\hs|Obj [l :-> m, ...]| as $\lambda\{L \to M \mid \dots\}$).
Copatterns are built using similar infix constructors and end with no-op
\hs|Nop|, forming a stylized list that we can concatinate using \hs|<>|.  Of
note, we abstract variable identifiers and projection indexes as generic types
\hs|a| and \hs|i|, respectively, which will come in handy several times.  A
special case of \hs|Copattern|s are \hs|Question|s---contexts that might match
copatterns---given by filling the variables (left of \hs|:*|) in a copattern
with a \hs|Term i a|.  Being contexts, we can plug a term into a question via
\hs|ask|.

Second, we have to implement a single step of reduction, which is shown in
\cref{fig:block-reduce-code}.  The \hs|reduce| function takes a (potential)
\hs|Redex| \emph{and} a \hs|Question| to produce some \hs|Followup| result:
either a \hs|Reduct| is asked the next \hs|Question|, or comatching needed more
context.  The \hs|comatch| function compares the left-hand-side against a
question, creating a substitution environment and saying if there is a full
match (producing a followup question out of the remaining context), an
incomplete match (producing the copattern that continues past the given
question), or a mismatch.  In the case of a full match, \hs|reduce| will
substitute (via \hs|//|) the matching environment into the right-hand side and
return the followup question.  In the case of an incomplete match, reduction is
blocked, and in the case of a mismatch, the next option is tried.

Third, we must spell out how to find the next redex in a term.  A direct-style
\hs|search| function is shown in \cref{fig:block-search-code}, which identifies
both a redex as well as the question asked of it.  Following the syntactic
correspondence methodology \cite{SyntacticCorrespondence}, we can derive a
tail-recursive decomposition function from \hs|search| using standard program
transformations: CPS transformation, defunctionalization
\cite{DefinitionalInterpreters}, and compressing corridor transitions (\ie
inlining and simplifying known function calls to partially-known arguments).
Along the way, it becomes clear that the evaluation contexts are isomorphic to
the \hs|Question|s we need for reduction, which comes from the choice of
call-by-name evaluation order.  Swapping to the existing representation and
applying the monoid laws gives \hs|decomp| in \cref{fig:block-decomp-code}.

Finally, the small-step interpreter is given in
\cref{fig:block-small-step-code}, which repeatedly decomposes a term, reduces
it, recomposes the result, and loops.  This algorithm is equivalent to the
following relational small-step semantics, presented in terms of \emph{inside
  out} evaluation contexts and a $\mathit{comatch}$ function:
\begin{alignat*}{2}
  \mathit{Cont} &\ni{}& K
  &::= \varepsilon
  \mid N ~ K
  \mid X ~ K
\end{alignat*}
\begin{gather*}
  \begin{aligned}
    (\delta)&&
    K[M.] &\mapsto K[M ~ M]
    \\
    (\beta)&&
    K[\lambda\{\many[1 \leq i \leq n]{L_i \to M_i}\}]
    &\mapsto
    K'[M_j\many{\subst{x}{N}}]
  \end{aligned}
  \\
  \qquad\qquad
  \left(
    \begin{alignedat}{2}
      &\text{if } &\mathit{comatch}(L_j, K) &= (\vect{\subst{x}{N}}, K')
      \\
      &\text{and } &\forall i < j,~ \mathit{comatch}(L_i, K) &= \mathit{Mismatch}
    \end{alignedat}
  \right)
\end{gather*}

% \begin{alignat*}{2}
%   \mathit{EvalCxt} &\ni{}& E
%   &::= \hole
%   \mid E ~ N
%   \mid E ~ X
%   \\
%   \mathit{Cont} &\ni{}& K
%   &::= \varepsilon
%   \mid N ~ K
%   \mid X ~ K
% \end{alignat*}

% \begin{align*}
%   (E~N)[M] &= E[M] ~ N
%   \\
%   (E~X)[M] &= E[M] ~ X
%   \\
%   \hole[M] &= M
% \end{align*}

% \begin{align*}
%   (N~K)[M] &= K[M~N]
%   \\
%   (X~K)[M] &= K[M~X]
%   \\
%   \varepsilon[M] &= M
% \end{align*}

% \begin{align*}
%   \insideout{\hole} &= \varepsilon
%   &
%   \outsidein{\varepsilon} &= \hole
%   \\
%   \insideout{E~N} &= \insideout{E} \comp (N~\varepsilon)
%   &
%   \outsidein{N~K} &= \outsidein{K} \comp (\hole~N)
%   \\
%   \insideout{E~X} &= \insideout{E} \comp (X~\varepsilon)
%   &
%   \outsidein{X~K} &= \outsidein{K} \comp (\hole~X)
% \end{align*}

\subsection{Abstract machine}

\begin{figure}
\centering
\renewcommand{\basicstylesize}{\footnotesize}
\begin{haskell}
eval m = refocus m Nop

refocus (Var x)   k = Stuck x k
refocus (Dot m)   k = refocus m $ m :* k
refocus (Obj os)  k = case os of
  []               -> Raise k
  lhs :-> rhs : os -> comatch lhs k rhs os k
refocus (m :*: n) k = refocus m $ n :* k
refocus (m :@: i) k = refocus m $ i :@ k

comatch Nop         cxt       rhs _  _
  = refocus rhs cxt
comatch lhs         Nop       rhs os q
  = Under lhs rhs os q
comatch (x :* lhs) (y :* cxt) rhs os q
  = comatch lhs cxt (rhs // [(x,y)]) os q
comatch (i :@ lhs) (j :@ cxt) rhs os q
  | i == j
  = comatch lhs cxt rhs os q
comatch lhs         cxt       _   os q
  = refocus (Obj os) q
\end{haskell}
\caption{Tail-recursive abstract machine interpreter.}
\label{fig:block-machine-code}
\end{figure}

Continuing on, we can use the syntactic correspondence to transform the
direct-style small-step interpreter into a tail-recursive abstract machine.
First, we short-cut the recompose-decompose step and instead continue by
refocusing in place.
\begin{lemma}
  \label{thm:block-refocusing}

  \hs|decomp (recomp m k) = refocus m k|.
\end{lemma}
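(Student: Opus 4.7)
The plan is to proceed by structural induction on the continuation \hs|k|, peeling off one constructor at a time and matching the definitions of \hs|ask|, \hs|decomp|, and \hs|refocus| on both sides. The statement is an equation between two functions applied to the same arguments, so unfolding each side by one step until the equation becomes an instance of the induction hypothesis is the natural strategy.

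For the base case \hs|k = Nop|, the right-hand side is \hs|refocus m Nop|, which is literally the definition of \hs|decomp m|. The left-hand side is \hs|decomp (recomp m Nop)|; since \hs|recomp = ask| and \hs|ask m Nop = m|, this also collapses to \hs|decomp m|, so both sides agree immediately.

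For the inductive step I would take \hs|k = n :* q| (the \hs|i :@ q| case is symmetric). The second clause of \hs|ask| rewrites the left-hand side as \hs|decomp (ask m (n :* q)) = decomp (ask (m :*: n) q) = decomp (recomp (m :*: n) q)|. Applying the induction hypothesis with the enlarged term \hs|m :*: n| and the smaller continuation \hs|q| yields \hs|refocus (m :*: n) q|, and the fourth clause of \hs|refocus| rewrites that to \hs|refocus m (n :* q)|, matching the right-hand side and closing the step.

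The only real obstacle is recognizing that the induction must be on \hs|k| rather than on \hs|m|, and that the hypothesis needs to be quantified over all terms: the recursive unwinding repeatedly grows the term (from \hs|m| to \hs|m :*: n|) while shrinking the continuation, so a term-indexed induction would be stuck. Once the induction is set up correctly, the two halves collapse because the \hs|:*:| and \hs|:@:| clauses of \hs|refocus| are precisely the tail-recursive duals of the corresponding clauses of \hs|ask|, which is the whole point of the refocusing transformation from the syntactic correspondence.
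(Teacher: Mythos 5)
Your proof is correct and follows the same approach as the paper, which proves the lemma by induction on \hs|k :: Question i a|. Your write-up additionally makes explicit the key point the paper leaves implicit---that the induction hypothesis must be generalized over all terms \hs|m|, since the term grows as the question shrinks---which is precisely the right way to carry out that induction.
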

\begin{proof}
  By induction on \hs|k :: Question i a|.
\end{proof}
From there, it is a matter of applying standard program transformations: loop
fusion, compressing corridor transitions, and deforesting intermediate data
structures.  In order to fuse \hs|reduce| and \hs|comatch| in with the
small-step interpreter, we need to give reduction the same treatment as
\hs|search| to put it into a tail-recursive form: CPS transforming,
defunctionalizing \cite{DefinitionalInterpreters}, and compressing corridor
transitions.  As a simplification, the contexts produced by defunctionalization
are isomorphic to substitution environments, whose order is irrelevant (assuming
distinct names).  We also modify copattern matching to substitute immediately
when available, since substitution reassociates.
\begin{lemma}
  \label{thm:block-subst-reassoc}
  \hs|m // (env ++ env') = (m // env) // env'|
\end{lemma}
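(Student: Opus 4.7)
The plan is to prove the lemma by induction on the left environment \hs|env|. Viewing \hs|//| as a left fold of single-variable substitutions over the environment list, the base case \hs|env = []| is immediate since an empty environment acts as the identity: \hs|m // ([] ++ env') = m // env' = (m // []) // env'|. For the inductive case \hs|env = (x,t):env0|, I would unfold one step of \hs|//| at the head to expose a single-variable substitution, then appeal to the induction hypothesis on \hs|env0 ++ env'|, finally refolding the step on the right-hand side.

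Alternatively---and closer in spirit to how the surrounding code is organized---one could proceed by structural induction on \hs|m :: Term i a|. All constructor cases (\hs|:*:|, \hs|:@:|, \hs|Dot|, and \hs|Obj| together with its contained copatterns) propagate \hs|//| componentwise and reduce immediately to the inductive hypothesis on subterms. The only substantive case is \hs|Var x|, where the claim boils down to showing that \hs|lookup x (env ++ env')| agrees with first looking up \hs|x| in \hs|env| and, on failure, falling through to \hs|env'|: if \hs|x| is in the domain of \hs|env|, both sides yield the same bound term; otherwise both fall through to a lookup in \hs|env'|.

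The main obstacle is the interaction between substitution and the variables already in the range of \hs|env|. Under a simultaneous reading of \hs|//|, applying \hs|env'| after \hs|env| could in principle further rewrite a term that \hs|env| just introduced, breaking the equation. This is precisely why the paper's preceding remark flags that \emph{order is irrelevant only ``assuming distinct names''}: with that invariant, the ranges of \hs|env| contain only names that \hs|env'| cannot capture, so the two readings coincide. Making that hypothesis explicit (or working with the sequential fold definition, where the property is a direct consequence of \hs|foldl (++) associativity|) is what pins the proof down.
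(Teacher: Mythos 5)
Your second route is precisely the paper's proof, which reads in full ``By induction on \hs|m :: Term i a|'': the \hs|Var x| case reduces to the lookup-through-append behaviour you describe, and every other constructor propagates \hs|//| componentwise so that the induction hypothesis applies directly. Your closing caveat is also on target---the equation genuinely depends on the distinct-names convention that the paper invokes in the sentence immediately preceding the lemma (``whose order is irrelevant (assuming distinct names)''), so flagging that hypothesis explicitly is, if anything, more careful than the paper's own one-line proof.
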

\begin{proof}
  By induction on \hs|m :: Term i a|.
\end{proof}

The end result is the abstract machine interpreter shown in
\cref{fig:block-machine-code}.  We can rephrase this code into a traditional
stepping relation using machine states of forms
\begin{enumerate*}[(1)]
\item a refocusing or reduction state $\cut{M}{K}$, or
\item a copattern-matching state
  $\braket{L \cmid K \cmid M \cmid \many{O} \cmid K}$.
\end{enumerate*}
The initial state for evaluating $M$ is $\cut{M}{\varepsilon}$, and the final
states are one of:
\begin{enumerate*}[(1)]
\item stuck on an unknown variable $\cut{x}{K}$,
\item an unhandled question $\cut{\lambda\{\}}{K}$, or
\item an underspecified question
  $\braket{L \cmid \varepsilon \cmid M \cmid \many{O} \cmid K}$ where $L \neq \varepsilon$.
\end{enumerate*}
This gives us the following stepping relation:
\begin{itemize}
\item Refocusing steps:
\begin{align*}
  \cut{M~N}{K}
  &\mapsto
  \cut{M}{N ~ K}
  &
  \cut{M~X}{K}
  &\mapsto
  \cut{M}{X ~ K}
\end{align*}

\item Reduction steps:
  \begin{align*}
  \cut{M.}{K}
  &\mapsto
  \cut{M}{M ~ K}
  \\
  \cut{\lambda\{L \to M \mid \many{O}\}}{K}
  &\mapsto
  \braket{L \cmid K \cmid M \cmid \many{O} \cmid K}
  % \\
  % \cut{\lambda\{\}}{K}
  % &\not\mapsto
  % \\
  % \cut{x}{K}
  % &\not\mapsto
\end{align*}

\item Copattern-matching steps:
\begin{align*}
  \braket{x ~ L \cmid N ~ K' \cmid M \cmid \many{O} \cmid K}
  &\mapsto
  \braket{L \cmid K' \cmid M\subst{x}{N} \cmid \many{O} \cmid K}
  \\
  \braket{X ~ L \cmid X ~ K' \cmid M \cmid \many{O} \cmid K}
  &\mapsto
  \braket{L \cmid K' \cmid M \cmid \many{O} \cmid K}
  \\
  \braket{\varepsilon \cmid K' \cmid M \cmid \many{O} \cmid K}
  &\mapsto
  \cut{M}{K'}
  \\
  \braket{L \cmid \varepsilon \cmid M \cmid \many{O} \cmid K}
  &\not\mapsto
  \qquad\qquad\qquad\quad\;\;
  (\text{if } L \neq \varepsilon)
  \\
  \braket{L \cmid K' \cmid M \cmid \many{O} \cmid K}
  &\mapsto
  \cut{\lambda\{\many{O}\}}{K}
  \quad(\text{otherwise})
\end{align*}
\end{itemize}

\begin{remark}
  \label{rm:environment-machine}
  Note that this abstract machine inefficiently traverses terms many times to
  perform substitution.  To derive an environment-based machine that more
  efficiently uses closures and lookup, see appendix \cref{sec:environment-machine}.
\end{remark}

\subsection{Continuation-passing style transformation}

\begin{figure}
\centering
\renewcommand{\basicstylesize}{\footnotesize}
\begin{haskell}
data Answer i a
  = Under (CPSTerm i a)
  | Raise (CPSQuestion i a)
  | Stuck a (CPSQuestion i a)

type CPSQuestion i a = Copattern i (CPSArg i a)
type CPSTerm i a = CPSQuestion i a -> Answer i a
type CPSOption i a = CPSTerm i a -> CPSTerm i a
type CPSCopat i a = CPSQuestion i a -> CPSOption i a

newtype CPSArg i a = Arg { useArg :: CPSTerm i a }

data CPSVar i a = Name a | Subs (CPSTerm i a)

instance Eq a => Eq (CPSVar i a) where
  Name x == Name y = x == y
  _      == _      = False

eval :: (Eq i, Eq a) => Term i a -> Answer i a
eval m = (term (fmap Name m)) Nop

term :: (Eq i, Eq a) => Term i (CPSVar i a)
     -> CPSTerm i a
term (Var (Name x)) = Stuck x
term (Var (Subs m)) = m
term (Dot m)   = \k -> (term m) (Arg(term m) :* k)
term (Obj os)  = options os
term (m :*: n) = \k -> (term m) (Arg(term n) :* k)
term (m :@: i) = \k -> (term m) (i :@ k)

options :: (Eq i, Eq a) => [Option i (CPSVar i a)]
        -> CPSTerm i a
options [] = Raise
options (lhs :-> rhs : os)
  = \q -> (comatch lhs rhs) q (options os) q

comatch :: (Eq i, Eq a) => Copattern i (CPSVar i a)
        -> Term i (CPSVar i a) -> CPSCopat i a
comatch Nop        rhs = \_ _   -> (term rhs)
comatch (x :* lhs) rhs = \q os -> \case
  (y :* k) -> (comatch lhs (rhs // [(x, n)])) q os k
    where n = Var (Subs (useArg y))
  Nop      -> Under $ (comatch (x :* lhs) rhs) q os
  _        -> os q
comatch (i :@ lhs) rhs = \q os -> \case
  (j :@ k)
  | i == j -> (comatch lhs rhs) q os k
  Nop      -> Under $ (comatch (i :@ lhs) rhs) q os
  _        -> os q
\end{haskell}
\caption{Continuation-passing style translation to Haskell.}
\label{fig:block-cps-code}
\end{figure}

At the end of our journey, we arrive at a continuation-passing style translation
to native Haskell functions, as shown in \cref{fig:block-cps-code}.  This
translation is derived from \cref{fig:block-machine-code} by
\begin{enumerate*}[(1)]
\item desugaring nested patterns into flat patterns on a single value,
\item $\eta$-reduction, and
\item applying the transition functions as soon as possible.
\end{enumerate*}
Of note, the code elaborates a detail that is usually suppressed in CPS
transformations: when going under a binder, we have to replace a \emph{syntactic
  name} from the source program with a \emph{semantic denotation} within the
CPS.  We accommodate this step in the middle of the CPS process by representing
variables as either a plain \hs|Name| or one that was \hs|Subs|tituted by a
\hs|CPSTerm|, and should be translated as-is.

To better understand the code, we can elide some of these details in the
corresponding CPS translation of terms $\den{M}$, lists of options
$\den{\many{O}}$, and individual copattern-matching options $\den{L \to M}$ in a
more traditional notation (where $\Rec$ denotes a recursive fixed point to
handle the non-structural recursion for under-application in \hs|comatch|):

\begin{itemize}
\item  Translating terms $\den{M}$:
\begin{gather*}
\begin{aligned}
  \den{x} &= x
  &
  \den{M.} &= \lambda k.~ \den{M} ~ (\den{M}, k)
  \\
  \den{M~N} &= \lambda k.~ \den{M} ~ (\den{N}, k)
  &
  \den{M~X} &= \lambda k.~ \den{M} ~ (X ~ k)
\end{aligned}
  \\
  \den{\lambda\{\many{O}\}} = \den{\many{O}}
\end{gather*}
\item  Translating lists of options $\den{\many{O}}$:
\begin{align*}
  \den{\varepsilon} &= \lambda k.~ k
  \\
  \den{L \to M \mid \many{O}}
  &=
  \lambda k.~ \den{L \to M} ~ k ~ \den{\many{O}} ~ k
\end{align*}
\item Translating one copattern-matching option $\den{L \to M}$:
\begin{align*}
  \den{\varepsilon \to N} &= \lambda q. \lambda f.~ \den{N}
  \\
  \den{x ~ L \to N} &= \Rec r = \lambda q. \lambda f. \lambda k.
  \\[-1ex]
  &\qquad
  \Case k \Of
  \begin{alignedat}[t]{2}
    &(x, k') &&\to \den{L \to N} ~ q ~ f ~ k'
    \\[-1ex]
    &() &&\to r ~ q ~ f
    \\[-1ex]
    &k &&\to f ~ q
  \end{alignedat}
  \\
  \den{X ~ L \to N} &= \Rec r = \lambda q. \lambda f. \lambda k.
  \\[-1ex]
  &\qquad
  \Case k \Of
  \begin{alignedat}[t]{2}
    &(X ~ k') &&\to \den{L \to N} ~ q ~ f ~ k'
    \\[-1ex]
    &() &&\to r ~ q ~ f
    \\[-1ex]
    &k &&\to f ~ q
  \end{alignedat}
\end{align*}
\end{itemize}

\begin{remark}
  Typically, we would keep going and refunctionalize continuations---in this
  case, \hs|Question|s---into first-class functions.  This step becomes
  difficult in cases like ours, compounded next in \cref{sec:refactor}, which is
  not in the image of ordinary defunctionalization.  Refunctionalization can be
  made total by using copattern matching on
  codata~\cite{RefunctionalizationCopatterns}---but that verges on begging the
  question by defining copatterns in terms of copatterns.  But not to worry!
  The CPS transformation given here corresponds to a well-known and well-behaved
  one for call-by-name $\lambda$-calculus, based on a concrete representation of
  continuations as pair and sum types
  \cite{ContinuationModels,ClassicalLogicContinuationSemantics,CBNSyntacticCPS,AnswerTypePolyEta},
  so we can stop here.
\end{remark}

Because all three semantics have been derived from a single origin using correct
program transformations, we now get a theorem about their correspondence that is
correct by construction.  For simplicity, we single out raising an unanswered
question as the main observation of programs.
\begin{theorem}
\label{thm:block-semantic-correspondence}

The three \hs|eval| functions are equal, \ie the following relations between $M$
and $K$ are all equivalent:
\begin{enumerate}[(a)]
\item $M \mapsto^* K[\lambda\{\many{L_i \to M_i}\}]$ such that, for all $i$, \\
  $\mathit{comatch}(L_i,K) = \mathit{Mismatch}$.
\item $\cut{M}{\varepsilon} \mapsto^* \cut{\lambda\{\}}{K}$.
\item $\den{M}() \mapsto^* \den{K}$, where $\den{K}$ is
  \begin{align*}
    \den{\varepsilon} &= ()
    &
    \den{N~K} &= (\den{N}, \den{K})
    &
    \den{X~K} &= X~\den{K}
  \end{align*}
\end{enumerate}
\end{theorem}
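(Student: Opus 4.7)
The plan is to exploit the fact that all three semantics were derived from a common origin using only meaning-preserving program transformations (CPS conversion, defunctionalization, refocusing, corridor compression, loop fusion, substitution reassociation). Rather than prove equivalence directly between the three presentations, I would prove each consecutive step in the derivation chain preserves the observation of "raising an unanswered question," and then conclude the three-way equivalence by transitivity. Since the Haskell code in \cref{fig:block-small-step-code,fig:block-machine-code,fig:block-cps-code} are the true intermediate artifacts, I would phrase each lemma at the level of \hs|eval|, and only at the end translate the resulting statements into the relational presentation given in (a), (b), (c).

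For (a) $\Leftrightarrow$ (b), I would first show that the relational small-step rules and copattern-matching states faithfully simulate the functional interpreter of \cref{fig:block-small-step-code}: the context $K$ corresponds to a \hs|Question|, the decomposition $C[\lambda\{\dots\}]$ corresponds to \hs|Asked (Respond ops) k|, and $\mathit{comatch}$ in the side condition is exactly the \hs|comatch| function. Then I would use \cref{thm:block-refocusing} to replace \hs|decomp . recomp| with \hs|refocus| inside the interpreter loop, and observe that fusing \hs|reduce| and \hs|comatch| with refocusing (together with \cref{thm:block-subst-reassoc} to move substitutions inward) yields exactly the abstract machine of \cref{fig:block-machine-code}, whose stepping relation is the one spelled out just below the figure. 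The final \hs|Raise k| outcome in both interpreters corresponds precisely to the target shapes in (a) and (b), so equivalence follows.

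For (b) $\Leftrightarrow$ (c), I would trace through the three transformation steps announced after \cref{fig:block-machine-code}: desugaring nested patterns on a single argument, $\eta$-reduction, and eager application of transition functions. Each of these is a local rewriting that preserves the \hs|Answer| produced by \hs|eval|. The slightly delicate point is the \hs|CPSVar| trick: a source variable becomes either \hs|Name x| (which evaluates to \hs|Stuck x|) or \hs|Subs m| (which is spliced as the already-translated meaning \hs|m|). I would prove by induction on the copattern $L$ that substituting $\subst{x}{N}$ in the source corresponds to replacing \hs|Name x| with \hs|Subs (term N)| in the CPS translation, so that after matching, \hs|term (rhs // sigma)| equals \hs|term rhs| with the appropriate \hs|CPSArg|s threaded in. Reading off the \hs|Raise| outcome of the CPS \hs|eval| and translating the Haskell \hs|Question| back into the syntactic $\den{K}$ of clause (c) finishes that direction.

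The main obstacle will be tightening the variable-representation gloss into an honest substitution lemma for the CPS translation: the informal $\den{\cdot}$ presentation hides the distinction between syntactic $x$ and its semantic denotation, but the Haskell code makes it explicit, and any formal equivalence argument has to carry along an invariant relating source substitutions $M\subst{x}{N}$ to environments of \hs|Subs|-tagged variables. Once that invariant is in place, the remaining induction on the structure of $M$, of copatterns $L$, and of lists of options $\many{O}$ is routine, since every non-trivial case is already a recorded program-transformation step whose side conditions (distinct bound names, congruence of \hs|//|, monoidal append on \hs|Question|s) have been established in the preceding lemmas.
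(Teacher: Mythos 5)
Your proposal takes essentially the same route as the paper: the theorem is established ``correct by construction'' by transitivity along the chain of meaning-preserving transformations (refocusing via \cref{thm:block-refocusing}, fusion with \hs|comatch| using \cref{thm:block-subst-reassoc}, then desugaring, $\eta$-reduction, and eager application to reach the CPS), with the detailed step-by-step derivation serving as the proof. Your added attention to the \hs|Name|/\hs|Subs| substitution invariant is a real obligation in that chain, but it is a refinement of, not a departure from, the paper's argument.
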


\section{Refactoring Syntax and Semantics: A Short Rest Among the Lambdas}
\label{sec:refactor}

The joy of working with a CPS transformation, like the one we now have, is that
we can employ the high-powered theory of the $\lambda$-calculus to reason about
the semantics of our programming language.  The $\lambda$-calculus has a
thoroughly developed suite of semantic tools to help us prove properties of the
transformation, and the denotational style of CPS unlocks many out-of-order
simplifications and rewrites for free.  Let's now use this ability to refactor
our language.

\subsection{First refactor: Delimiting the context}

% Note, we already have a term to express ``first-class'' continuations.  Suppose
% we translate continuations $\den{K}$ (\ie copatterns with terms substituted for
% the bound variables) directly, where $k_0$ stands for the ``initial continuation,'' as:
% \begin{align*}
%   \den{\varepsilon} ~ k_0 &= k_0
%   \\
%   \den{N ~ K}~k_0 &= (\den{N}, (\den{K}~k_0))
%   \\
%   \den{X ~ K}~k_0 &= X ~ (\den{K}~k_0)
% \end{align*}
% \begin{lemma}
%   \label{thm:cps-cont-refocus}
%   $\den{K[M]} ~ \den{K'} \mapsto^* \den{M} ~ (\den{K \compq K'})$
% \end{lemma}
% \begin{proof}
%   By induction on the syntax of $K$.
% \end{proof}
% \begin{corollary}
%   \label{thm:cps-cont-refocus-top}
%   $\den{K[M]}() \mapsto^* \den{M} ~ \den{K}$
% \end{corollary}

% Then notice how an empty $\lambda$, which responds to no continuation, will
% immediately raise it to the ``top-level'':
% \begin{align*}
%   \den{K[\lambda\{\}]}()
%   &=
%   \den{\lambda\{\}} ~ \den{K}
%   &(\cref{thm:cps-cont-refocus-top})
%   \\
%   &=
%   (\lambda k.~ k) ~ \den{K}
%   &(\text{def})
%   \\
%   &=
%   \den{K}
%   &(\beta)
% \end{align*}

% \begin{align*}
%   \den{\varepsilon \ans q \to M}
%   &=
%   \lambda q. \den{M}
% \end{align*}

One of the awkward aspects of the semantics so far has to do with unresolved
matching, when a copattern is expecting more information than the context
provides.  Currently, there is no way to tell when a question is really
``done,'' or if we are missing some part of the context.  As a consequence, the
CPS transformation has to handle cases of an empty stack by trying again to
consume more continuation.  This leads to the complex recursive structure of
copatterns (seen in the $\Rec$ fixed point) and confusion between different
types of continuations (\eg $x~L$ may get a pair or an empty unit).

To resolve these ambiguities, let's add a delimiter to the language, $M!$, that
signals the definite end to a question.  If copattern matching reaches the final
punctuation ($!$), then it knows there will never be more context arriving and
it can move on to the next option.  But if $M!$ signals the end of questioning,
we can never interrogate the answer with another question, so where else can the
delimiter appear except at the ``top'' of the whole program?

Giving the program some internal control over delimited questions amounts to a
form of call-by-name delimited control.  That is to say, a term can abstract
over a given question by giving it a name $q$ in $\ans q \to R$, where $R$ is a
\emph{response} that explains how to continue, which could be asking the same
question $M \ans q$, a different one $M \ans q'$, or a now explicitly empty
$M \ans \varepsilon$.  Instead of asking, a response could also raise an
unanswered question itself, to be handled at a higher level.

Extending the syntax we have so far, first-class delimited questions have the
following grammatical structure:
\begin{alignat*}{2}
  \mathit{Response} &\ni{}& R
  &::= q
  \mid \varepsilon
  \mid M \ans R
  \\
  \mathit{Term} &\ni{}& M, N
  &::= \dots
  \mid \ans q \to R
\end{alignat*}
Whose semantics is given by extending the CPS transform:
\begin{align*}
  \den{M \ans R}
  &=
  \den{M} ~ \den{R}
  &
  \den{\varepsilon}
  &=
  ()
  &
  \den{q}
  &=
  q
  \\[1ex]
  \den{\ans q \to R}
  &=
  \lambda q.~ \den{R}
\end{align*}
Notice how call-by-name delimited questions can compose, but the compositional
structure is opposite to call-by-value delimited control like shift and reset
\cite{AbstractingControl,FunctionalAbstractionTypedContexts}.  Rather than
composing multiple continuations like functions from inputs to outputs, we can
instead compose several terms---one after another---as functions from questions
to answers that handle the unanswered questions raised by the next one in line:
\begin{align*}
  \den{M \ans (N \ans R)}
  &=
  \den{M} ~ (\den{N} ~ \den{R})
  &
  \den{M \ans \varepsilon}
  &=
  \den{M} ~ ()
\end{align*}
The term that immediately raises its given question can be expressed
$\ans q \to q$.  Later in \cref{sec:control-copat}, it will be useful to have a
special form $\Raise$ to denote when this has happened, without having to take
any more steps to calculate the response:
\begin{align*}
  \den{\Raise}
  &=
  \lambda q. q
  = \den{\ans q \to q}
\end{align*}

% Note, we can wrap up functional forms of
% \begin{itemize}
% \item a jump/throw operation $\ans\_ \to (x \ans k)$
% \item an abort operation $\ans\_ \to (x \ans \varepsilon)$
% \item first-class abortive continuations $\lambda \{x \to \ans\_ \to (x \ans R)\}$
% % \item first-class option $\lambda \{x \to (O \ask x)\}$
% \end{itemize}

% This lets us express abortive control operators like \texttt{call/cc} and $\mathcal{C}$:
% \begin{align*}
%   \mathtt{call/cc}
%   &=
%   \lambda \{h \to \ans k \to h ~ \lambda\{x \to \ans\_ \to x \ans k\} \ans k\}
%   \\
%   \mathcal{C}
%   &=
%   \lambda \{h \to \ans k \to h ~ \lambda\{x \to \ans\_ \to x \ans k\} \ans {}\}
% \end{align*}

\subsection{Second refactor: Nesting copatterns}

Another source of difficulty is the complex structure of copattern-matching
$\lambda$-abstractions, which forces a monolithic matching algorithm.  In the
interest of factoring out individual aspects of copattern matching, we will
decompose the copattern options into smaller parts.  The first step is to
reassociate copatterns to the right, taking them one step at a time in the style
of currying, for example replacing $(x~L) \to M$ with $x \to (L \to M)$.  The
second step is to get a handle on how a match failure should proceed, since
every copattern option needs to do something if it can't respond to its context.
We write $O \ask M$ to mean that $O$ is the first option to answer the context
and, if it fails, the term proceeds as $M$ in the \emph{same} context.  Dually,
the option needs a way to signal success when the right-hand side is reached,
which we write $\ask x \to N$ to mean that $N$ is returned to this context, and
the (now untaken) failure alternative is bound to $x$ in case the program needs
to invoke it manually.

Nested options have the following grammatical structure:
\begin{alignat*}{2}
  \mathit{Term} &\ni{}& M, N
  &::= \dots
  \mid O \ask M
  \\
  \mathit{Option} &\ni{}& O
  &::= x \to O
  \mid X \to O
  \mid \ask x \to M
\end{alignat*}
On the one hand, we can understand the new forms (besides $\ask x \to M$) in
terms of the old ones:
\begin{align*}
  O \ask M &= \lambda\{O \mid \varepsilon \to M\}
  \\
  x \to (L \to M) &= (x~L) \to M
  \\
  X \to (L \to M) &= (X~L) \to M
\end{align*}
% From which we can derive the following CPS translation:
% \begin{align*}
%   \den{O \ask M}
%   &=
%   \lambda k.~ \den{O} ~ k ~ \den{M} ~ k
%   \\[1ex]
%   \den{x \to O}
%   &=
%   \lambda q. \lambda f. \lambda k.
%   \Case k \Of
%   \begin{alignedat}[t]{2}
%     &(x, k') &&\to \den{O} ~ q ~ f ~ k' \\
%     &k &&\to f ~ q
%   \end{alignedat}
%   \\
%   \den{X \to O}
%   &=
%   \lambda q. \lambda f. \lambda k.
%   \Case k \Of
%   \begin{alignedat}[t]{2}
%     &(X ~ k') &&\to \den{O} ~ q ~ f ~ k' \\
%     &k &&\to f ~ q
%   \end{alignedat}
%   \\
%   \den{\ask x \to M}
%   &=
%   \lambda \_. \lambda x. \den{M}
% \end{align*}
On the other hand, we can decompose the monolithic syntax into smaller, nested
pieces:
\begin{align*}
  % \lambda\{O\} &= O \ask \Raise
  % \\
  % O_1 \mid \dots \mid O_n &= \ask x \to ~ O_1 \ask (\dots \ask (O_n \ask x))
  % \\
  \lambda\{O_1 \mid \dots \mid O_n\}
  &=
  O_1 \ask (\dots \ask (O_n \ask \Raise)
  \\
  \varepsilon \to M &= \ask \_ \to M
  \\
  (x~L) \to O &= x \to (L \to O)
  \\
  (X~L) \to O &= X \to (L \to O)
\end{align*}
% Some new forms that are now expressible in the nested syntax:
% \begin{align*}
%   (L \ask x \to M) &= L \to (\ask x \to M)
%   \\
%   (L \ans q \to R) &= L \to (\ask \_ \to (\ans q \to R))
%   \\
%   (L \ans q \ask x \to R) &= L \to (\ask x \to (\ans q \to R))
% \end{align*}

More interestingly, we now have enough flexibility over options and their
failure modes to encode dynamic composition of copattern matching, as used by
the arithmetic evaluator in \cref{sec:example}.  Importantly, we can express
vertical composition of cases with a special method (here, $\mathit{Open}$) as:
\begin{align*}
  \mathtt{object}~O
  &=
  \lambda \{ O \mid \mathit{self}~Open \to \lambda\{x \to O \ask x\} \}
  \\
  \mathit{compose}
  &=
  \lambda o ~ o' \to
  \mathtt{object}~\{\ask x \to o.Open(o'.Open ~ x)\}
\end{align*}
so that
$\mathit{compose}~\mathtt{object}\{O\}~\mathtt{object}\{O'\} =
\mathtt{object}\{O \mid O'\}$.

% Syntactic sugar: $\Raise = \ans k \to k$, so that
% $\den{\Raise} = \lambda k. k$.  This is useful for recognizing answers of the
% form $E[\Raise]\ans$ which corresponds to raising a data structure
% corresponding to the question formed by refocusing the evaluation context $E$.

\subsection{Third refactor: Eliminating redundancy}

% Translating responses:
% \begin{align*}
%   \den{k}
%   &=
%   \lambda s.~ s ~ k
%   \\
%   \den{\varepsilon}
%   &=
%   \lambda s.~ s ~ ()
%   \\
%   \den{M \ans R}
%   &=
%   \lambda s.~ \den{R} ~ \lambda k.~ \den{M} ~ k ~ s
% \end{align*}
% Translating terms as functions from questions to answers:

At this point, there is some redundancy in the way the CPS translation handles
options.  Deriving a CPS transformation (here named $Opt$) for the option
handler $O \ask M$ gives:
\begin{align*}
  Opt\den{O \ask M} &= \lambda k.~ Opt\den{O} ~ k ~ \den{M} ~ k
\end{align*}
The given continuation is used twice: once to be analyzed for copattern
matching, and a copy to revert back on a failure so that $M$ can start again
from the original $k$.  However, it would be cleaner to just pass the
continuation once like so (naming the new transformation $Opt'$ to
disambiguate):
\begin{align*}
  % \den{x} &= x
  % \\
  % \den{M ~ X} &= \lambda k.~ \den{M} ~ (X ~ k)
  % \\
  % \den{M ~ N} &= \lambda k.~ \den{M} ~ (\den{N}, k)
  % \\
  % \den{M.} &= \lambda k.~ \den{M} ~ (\den{M}, k)
  % \\
  Opt'\den{O \ask M} &= \lambda k.~ Opt'\den{O} ~ \den{M} ~ k
  % \\
  % \den{\ans k \to R} &= \lambda k.~ \den{R}
\end{align*}
We can get away without the copy by modifying the failure term on each step of
copattern matching, in a way that restores the original structure of the
continuation as follows:
\begin{align*}
  Opt'\den{x \to O}
  &=
  \lambda f. \lambda k.
  \Case k \Of
  \\[-0.5ex]
  &\qquad
  \begin{alignedat}[t]{2}
    &(x, k') &&\to Opt'\den{O} ~ (\lambda q.~ f ~ (x, q)) ~ k'
    \\[-0.5ex]
    &k &&\to f ~ k
  \end{alignedat}
  \\
  Opt'\den{X \to O}
  &=
  \lambda f. \lambda k.
  \Case k \Of
  \\[-0.5ex]
  &\qquad
  \begin{alignedat}[t]{2}
    &(X ~ k') &&\to Opt'\den{O} ~ (\lambda q.~ f ~ (X ~ q)) ~ k'
    \\[-0.5ex]
    &k &&\to f ~ k
  \end{alignedat}
  \\
  Opt'\den{\ask x \to M}
  &=
  \lambda x. \den{M}
\end{align*}
Even though the failure handler $f$ is called with a different continuation, the
result is the same as before.
\begin{lemma}
  \begin{align*}
    Opt\den{O} ~ (q \compq k) ~ f ~ k
    =
    Opt'\den{O} ~ (\lambda k'.~ f ~ (q \compq k')) ~ k
  \end{align*}
\end{lemma}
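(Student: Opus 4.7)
The plan is to prove the equation by structural induction on the option $O$, unfolding the two CPS translations and case-analyzing on the shape of the continuation $k$ at each step. The underlying invariant is that whenever the LHS calls $f ~ (q \compq k'')$ for some remainder $k''$, the RHS calls the wrapper $\lambda k'.\, f ~ (q \compq k')$ on $k''$, and a single $\beta$-step collapses these into the same expression. Associativity of copattern concatenation is the key algebraic fact that lets the accumulator $q$ grow one head at a time while the ``original'' question $q \compq k$ stays unchanged.

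For the inductive case $O = x \to O'$, both sides scrutinize $k$. In the productive subcase $k = (y, k')$, the LHS continues as $Opt\den{O'[x := y]} ~ (q \compq k) ~ f ~ k'$ while the RHS continues as $Opt'\den{O'[x := y]} ~ (\lambda k''.\, f ~ (q \compq (y, k''))) ~ k'$. Applying the associativity of $\compq$ --- specifically $q \compq (y, k') = (q \compq y :* \varepsilon) \compq k'$ --- puts the LHS in the form required by the induction hypothesis with grown accumulator $q' = q \compq y :* \varepsilon$, which yields exactly the RHS. In the mismatch subcase, both sides reduce to $f ~ (q \compq k)$ after a $\beta$-step on the wrapper. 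The projection case $O = X \to O'$ is entirely analogous, prepending $X :@ \varepsilon$ onto the accumulator.

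The base case $O = \ask x \to M$ is the main obstacle. Here both translations bind $x$ to their respective failure handler and then run $\den{M}$: the LHS substitutes $[x := f]$ while the RHS substitutes $[x := \lambda k'.\, f ~ (q \compq k')]$. When $x$ is unused in $M$ (as in the derived form $\varepsilon \to M = \ask\, \_ \to M$), the two substitutions trivially coincide. In the general case, the equality follows from the invariant: each user-visible invocation of $x$ in $M$ ultimately passes the full restart question $q \compq \den{R}$ to $f$ on both sides --- explicitly on the LHS and implicitly through the wrapper on the RHS --- so the two effective computations match. The bulk of the induction is routine once the accumulator-growth invariant is made explicit; the delicate part is just pinning down the convention for how the failure alternative $x$ is observed so that both sides agree at the base.
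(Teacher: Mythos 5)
Your inductive skeleton is the same as the paper's (one-line) proof---induction on the option with cases on $k$---and your treatment of the inductive cases is correct and fills in exactly what the paper leaves implicit: the accumulator $q$ grows by one frame using associativity of $\compq$, and the mismatch case collapses to $f~(q \compq k)$ on both sides after a $\beta$-step on the wrapper. That part would survive scrutiny.

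The gap is your base case $\ask x \to M$. You take the left-hand side to substitute $[x := f]$ and then claim the two sides agree because every invocation of $x$ ``ultimately passes the full restart question to $f$ on both sides.'' That claim is false: under $[x := f]$, an occurrence of $x$ invoked in $M$ at some continuation $k''$ yields $f~k''$, with the consumed prefix lost, whereas under $[x := \lambda k'.~ f~(q \compq k')]$ it yields $f~(q \compq k'')$. Concretely, take $O = y \to (\ask x \to x~y)$, $q = \varepsilon$, and $k = (\den{N}, k_0)$: after the match consumes $\den{N}$, the left side produces $f~(\den{N}, k_0)$ while the right side produces $f~(\den{N}, (\den{N}, k_0))$, the outer $\den{N}$ being the restored consumed argument that only the wrapper reinstates. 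Note this is not a corner case one can wave away: the compositional calculus genuinely uses the bound alternative (e.g.\ $x$ is applied in the $\mathit{compose}$ encoding), and the discrepancy persists even at $q = \varepsilon$, so it would also break the corollary. The resolution is that the base case cannot be argued semantically from the raw binding at all: the failure alternative that $Opt\den{\ask x \to M}$ binds to $x$ must already be the prefix-restoring handler, which is not something $Opt$ can assemble compositionally from its own arguments ($f$ and the whole original question); it has to be taken as the \emph{definition} of the second-refactor translation at $\ask x \to M$, at which point the base case is immediate. In other words, the ``delicate part'' you correctly flagged is not a verification step but the point where the lemma pins down what ``binding the untaken alternative'' means; with the simpler binding you assumed, the stated equation is simply false.
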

\begin{proof}
  By induction on $L$ and cases on $k$.
\end{proof}
\begin{corollary}
  \begin{math}
    Opt\den{O} ~ k ~ f ~ k
    =
    Opt'\den{O} ~ f ~ k
    .
  \end{math}
\end{corollary}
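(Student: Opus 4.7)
The plan is to derive the corollary as an immediate specialization of the preceding lemma, without any new induction. Concretely, I would instantiate the lemma at the empty copattern $q := \varepsilon$ and then simplify both sides using the monoid laws for copattern concatenation together with standard $\eta$-equivalence in the CPS metalanguage.

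First I would handle the left-hand side. Substituting $q = \varepsilon$ into $Opt\den{O} ~ (q \compq k) ~ f ~ k$ reduces the first argument via the monoid identity law $\varepsilon \compq k = k$ (the \hs|Nop| clause of \hs|<>| in \cref{fig:block-syntax-code}), yielding $Opt\den{O} ~ k ~ f ~ k$, which is exactly the left-hand side of the corollary. Next I would handle the right-hand side: substituting $q = \varepsilon$ into $Opt'\den{O} ~ (\lambda k'.~ f ~ (q \compq k')) ~ k$ turns the abstracted failure handler into $\lambda k'.~ f ~ (\varepsilon \compq k') = \lambda k'.~ f ~ k'$, which $\eta$-reduces to $f$, giving $Opt'\den{O} ~ f ~ k$ as required.

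There is essentially no combinatorial content here beyond what the lemma already did; the only things to check are that the monoid identity and the $\eta$-step are both legal moves. The former is a concrete equation on the \hs|Copattern| algebraic data type. The latter is safe because, after the CPS translation, failure handlers are plain Haskell-style functions in a call-by-name $\lambda$-calculus over concrete pair/sum continuations (as emphasized in the remark following \cref{fig:block-cps-code}), where unrestricted $\eta$ is sound. So the hard part, if any, was really in establishing the lemma itself, and the corollary is just the clean boundary case of it.
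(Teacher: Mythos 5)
Your proposal is correct and matches the paper's (implicit) derivation: the corollary is stated as an immediate consequence of the lemma, obtained precisely by instantiating $q := \varepsilon$, collapsing $\varepsilon \compq k = k$ via the monoid identity (\hs{Nop} clause of \hs{<>}), and $\eta$-reducing $\lambda k'.\, f\,(\varepsilon \compq k')$ to $f$. Your added justification that $\eta$ is sound in the CPS target is a reasonable extra check, but no different route is taken.
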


\subsection{Fourth refactor: Fully continuation-passing style}

\begin{figure}
\centering
\footnotesize
\begin{alignat*}{2}
  \mathit{Response} \ni
  R
  &::= q
  \mid \varepsilon
  \mid M \ans R
  \\
  \mathit{Term} \ni M, N
  &::= x
  \mid M ~ N
  \mid M ~ X
  \mid M.
  \mid \Raise
  \mid O \ask M
  \mid \ans q \to R
  \\
  \mathit{Option} \ni O
  &::= x \to O
  \mid X \to O
  \mid \ask x \to M
\end{alignat*}
Translation of results $\den{R}$:
\begin{align*}
  \den{M \ans R} &= \lambda s.~ \den{R} ~ \lambda q.~ \den{M} ~ q ~ s
  &
  \den{q} &= \lambda s.~ s ~ q
  &
  \den{\varepsilon} &= \lambda s.~ s ~ ()
\end{align*}
Translation of terms $\den{M}$:
\begin{align*}
  \den{x} &= x
  \\
  \den{M~X} &= \lambda k.~ \den{M}~(X~k)
  \\
  \den{M~N} &= \lambda k.~ \den{M}~(\den{N}, k)
  \\
  \den{M.} &= \lambda k.~ \den{M}~(\den{M}, k)
  \\
  \den{\Raise} &= \lambda k.~ \lambda s. s ~ k
  \\
  \den{O \ask M} &= \lambda k.~ \den{O} ~ \den{M} ~ k
  \\
  \den{\ans q \to R} &= \lambda q.~ \den{R}
\end{align*}
Translation of options $\den{O}$:
\begin{align*}
  \den{x \to O}
  &=
  \lambda f. \lambda k.
  \begin{alignedat}[t]{2}
    \Case k \Of
    &(x, k') &&\to \den{O} ~ (\lambda q.~ f~(x, q)) ~ k'
    \\[-0.5ex]
    &k &&\to f ~ k
  \end{alignedat}
  \\
  \den{X \to O}
  &=
  \lambda f. \lambda k.
  \begin{alignedat}[t]{2}
    \Case k \Of
    &(X~k') &&\to \den{O} ~ (\lambda q.~ f~(X~q)) ~ k'
    \\[-0.5ex]
    &k &&\to f ~ k
  \end{alignedat}
  \\
  \den{\ask x \to M} &= \lambda x.~ \den{M}
\end{align*}
\caption{Refactored CPS and calculus of nested copatterns.}
\label{fig:nested-copat-cps}
\end{figure}

The last small detail revolves around the fact that the CPS transformation is no
longer strictly in CPS form, due to responses like $M \ans (N \ans R)$, which
gets transformed to an application of $\den{M}$ to the non-value
$\den{N}~\den{R}$.  But thankfully there is an easy fix to get back into strict
CPS: iterate CPS again~\cite{AbstractingControl}!  This gives another layer of
continuation for responses to elaborate the evaluation order of $M \ans R$ to
say $R$ is evaluated first, and on a return its answer is passed to $M$.  The
only affected cases of the transformation are:
\begin{align*}
  \den{M \ans R}
  &=
  \lambda s. \den{R} ~ \lambda k. \den{M} ~ k ~ s
  &
  \den{q}
  &=
  \lambda s. s ~ q
  &
  \den{\varepsilon}
  &=
  \lambda s. s ~ ()
  \\[1ex]
  \den{\Raise}
  &=
  \lambda k. \lambda s. s ~ k
\end{align*}

This last refactoring gives the final syntax and CPS transformation of the
compositional copattern calculus, which is shown in its entirety in
\cref{fig:nested-copat-cps}.

\section{Controlling Copatterns: The Return Voyage Back to Direct Style}
\label{sec:control-copat}

Having made the journey deriving semantics of monolithic copatterns from
small-step operational semantics to abstract machine to continuation-passing
style, we now aim to derive the semantics of compositional copatterns in reverse.

% \subsection{Preparing for the journey}

\subsection{Continuation-passing style}

\begin{figure}
\centering
\begin{haskell}
data Response i a
  = Splat a
  | End
  | Term i a :!: Response i a

data Term i a
  = Var a
  | Term i a :*: Term i a
  | Term i a :@: i
  | Dot (Term i a)
  | Option i a :?: Term i a
  | a :!-> Response i a
  | Raise

data Option i a
  = a :*-> Option i a
  | i :@-> Option i a
  | a :?-> Term i a
\end{haskell}
\caption{Data type representing refactored syntax trees.}
\label{fig:nest-syntax-code}
\end{figure}

\begin{figure}
\centering
\renewcommand{\basicstylesize}{\footnotesize}
\begin{haskell}
data Answer i a
  = Final   (CPSQuestion i a)
  | Stuck   [CPSTerm i a] a (CPSQuestion i a)
  | CoStuck [CPSTerm i a] a

run :: (Eq a, Eq i) => Response i a -> Answer i a
run r = (response (fmap Name r))

eval :: (Eq i, Eq a) => Term i a -> Answer i a
eval m = (term (fmap Name m)) Nop

try :: (Eq i, Eq a) => Option i a -> Answer i a
try o = (option (fmap Name o)) (term Raise) Nop

response :: (Eq i, Eq a) => Response i (CPSVar i a)
         -> Answer i a
response (Splat (Name k)) = CoStuck [] k
response (Splat (CPSQ q)) = Final q
response (Splat (CPST _)) = error "Illegal expr"
response (End)     = Final Nop
response (m :!: r) = (term m) <!> (response r)

(<!>) :: CPSTerm i a -> Answer i a -> Answer i a
f <!> Final q     = f q
f <!> Stuck gs x q = Stuck (f:gs) x q
f <!> CoStuck gs q = CoStuck (f:gs) q

term :: (Eq i, Eq a) => Term i (CPSVar i a)
     -> CPSTerm i a
term (Var (Name x)) = Stuck [] x
term (Var (CPST m)) = m
term (Var (CPSQ _)) = error "Illegal expr"
term (Dot m)    = \k -> (term m) (Arg(term m):*k)
term (m :*: n)  = \k -> (term m) (Arg(term n):*k)
term (m :@: i)  = \k -> (term m) (i :@ k)
term (Raise)    = \k -> Final k
term (q :!-> r) = \k -> (response r')
  where r' = r /!/ [(q, subQ k)]
term (o :?: m)  = \k -> (option o) (term m) k

option :: (Eq a, Eq i) => Option i (CPSVar i a)
       -> CPSOption i a
option (x :?-> m) = \f -> (term m')
  where m' = m // [(x, subT f)]
option (x :*-> o) = \f -> \case
  (y :* k) -> (option o') (f . (y :*)) k
  k        -> f k
  where o' = o /?/ [(x, subT $ useArg y)]
option (i :@-> o) = \f -> \case
  (j :@ k) | i == j -> (option o) (f . (i :@)) k
  k                 -> f k

subT m = TSub (Var (CPST m))
subQ k = RSub (Splat (CPSQ k))
\end{haskell}
\caption{Continuation-passing style translation of copatterns with nested options and control into Haskell.}
\label{fig:nest-cps-code}
\end{figure}

We begin with the continuation-passing style transformation from
\cref{fig:nested-copat-cps} that defines the semantics.  The high-level
specification of the transformation can be given a concrete Haskell
implementation shown in \cref{fig:nest-syntax-code,fig:nest-cps-code}.

\subsection{Abstract machine}

\begin{figure}
\centering
\renewcommand{\basicstylesize}{\footnotesize}
\begin{haskell}
data Answer i a
  = Final   (Cont i a)
  | Stuck   (MetaCont i a) a (Cont i a)
  | CoStuck (MetaCont i a) a

type Cont     i a = Question i a
type MetaCont i a = [Term i a]

subQ :: Question i a -> TRSub i a
subQ k = RSub (Raise `ask` k :!: End)

run :: (Eq a, Eq i) => Response i a
    -> Answer i a
run r = delim r []

eval :: (Eq i, Eq a) => Term i a
     -> Answer i a
eval m = refocus m Nop []

try :: (Eq i, Eq a) => Option i a
    -> Answer i a
try o = comatch o Raise Nop []

delim :: (Eq a, Eq i)
      => Response i a -> MetaCont i a
      -> Answer i a
delim (Splat k) s     = CoStuck s k
delim (End)     []    = Final Nop
delim (End)     (m:s) = refocus m Nop s
delim (m :!: r) s     = delim r (m : s)

refocus :: (Eq i, Eq a)
        => Term i a -> Cont i a -> MetaCont i a
        -> Answer i a
refocus (Var x)    k s = Stuck s x k
refocus (Dot m)    k s = refocus m (m:*k) s
refocus (m :*: n)  k s = refocus m (n:*k) s
refocus (m :@: i)  k s = refocus m (i:@k)s 
refocus (q :!-> r) k s = delim r' s
  where r' = r /!/ [(q, subQ k)]
refocus (o :?: m)  k s = comatch o m k s
refocus (Raise)    k (m:s) = refocus m k s
refocus (Raise)    k []    = Final k

comatch :: (Eq a, Eq i)
        => Option i a -> Term i a
        -> Cont i a -> MetaCont i a
        -> Answer i a
comatch (x :?-> n) m k      = refocus n' k
  where n' = n // [(x, TSub m)]
comatch (x :*-> o) m (n:*k) = comatch o' (m:*:n) k
  where o' = o /?/ [(x, TSub n)]
comatch (i :@-> o) m (j:@k)
  | i == j                  = comatch o (m:@:i) k
comatch o          m k      = refocus m k
\end{haskell}
\caption{Abstract machine for controlling copatterns.}
\label{fig:nest-machine-code}
\end{figure}

The continuation-passing style transformation can be modified in several
standard steps:
\begin{enumerate*}[(1)]
\item defunctionalization, 
\item waiting to apply the transformation function until the last moment,
\item syntactically inlining the semantics for substituted variables,
\item fusing substitution inlining with transformation,
\item $\eta$-expansion, and
\item rewriting chains of case-analysis as nested patterns.
\end{enumerate*}
The result of these program transformations on the CPS implementation gives the
tail-recursive abstract machine shown in \cref{fig:nest-machine-code}.  This
implementation can be rephrased into a traditional presentation of a stepping
relation on machine states:

\begin{itemize}
\item Delimiting steps:
\begin{align*}
  \braket{M \ans R \cmid S}
  &\mapsto
  \braket{R \cmid M; S}
  &
  \braket{\varepsilon \cmid M; S}
  &\mapsto
  \braket{M \cmid \varepsilon \cmid S}
\end{align*}

\item Refocusing steps:
  \begin{align*}
  % \braket{x \cmid K \cmid S}
  % &\not\mapsto
  % \\
  \braket{M ~ X \cmid K \cmid S}
  &\mapsto
  \braket{M \cmid X ~ K \cmid S}
  \\
  \braket{M ~ N \cmid K \cmid S}
  &\mapsto
  \braket{M \cmid N ~ K \cmid S}
  \\
  \braket{O \ask M \cmid K \cmid S}
  &\mapsto
  \braket{O \cmid M \cmid K \cmid S}
\end{align*}

\item Reduction steps:
  \begin{align*}
  \braket{M. \cmid K \cmid S}
  &\mapsto
  \braket{M \cmid M ~ K \cmid S}
  \\
  \braket{{\Raise} \cmid K \cmid M; S}
  &\mapsto
  \braket{M \cmid K \cmid S}
  \\
  \braket{\ans q \to R \cmid K \cmid S}
  &\mapsto
  \braket{R\subst{q}{K[\Raise]\ans\varepsilon} \cmid S}
\end{align*}

\item Copattern matching steps:
  \begin{align*}
  \braket{x \to O \cmid M \cmid N~K \cmid S}
  &\mapsto
  \braket{O\subst{x}{N} \cmid M~N \cmid K \cmid S}
  % \\
  % \braket{x \to O \cmid M \cmid K \cmid S}
  % &\mapsto
  % \braket{M \cmid K \cmid S}
  % &(K \neq N~K')
  \\
  \braket{X \to O \cmid M \cmid X~K \cmid S}
  &\mapsto
  \braket{O \cmid M~X \cmid K \cmid S}
  % \\
  % \braket{X \to O \cmid M \cmid K \cmid S}
  % &\mapsto
  % \braket{M \cmid K \cmid S}
  % &(K \neq X~K')
  \\
  \braket{\ask x \to N \cmid M \cmid K \cmid S}
  &\mapsto
  \braket{N\subst{x}{M} \cmid K \cmid S}
  \\
  \braket{O \cmid M \cmid K \cmid S}
  &\mapsto
  \braket{M \cmid K \cmid S}
  \quad(\text{otherwise})
\end{align*}
\end{itemize}

\begin{remark}
  \label{rm:environment-machine2}

  Similar to \cref{rm:environment-machine}, we can derive a more efficient
  environment-based machine by starting with a CPS that threads a substitution
  environment to lookup variables.  This machine is discussed in appendix
  \cref{sec:environment-machine}.
\end{remark}

\subsection{Small-step operational semantics}

\begin{figure}
\centering
\renewcommand{\basicstylesize}{\footnotesize}
\begin{haskell}
data CoFrame i a = Arg a | At i

data CoObject i a
  = CoO { coframe :: CoFrame i a,
          success :: Option i a,
          failure :: Term i a }

data RxTerm i a
  = FreeVar a
  | Introspect (Term i a)
  | Try a (Term i a) (Term i a)
  | Pop (CoObject i a) (Term i a)
  | Get (CoObject i a) i

data RdTerm i a
  = RdT (Term i a)
  | UnknownA a

reduce :: (Eq i, Eq a) => RxTerm i a
       -> RdTerm i a
reduce (Introspect m) = RdT $ m :*: m
reduce (Try x n m) = RdT $ n // [(x, TSub m)]
reduce (Pop (CoO (Arg x) o m) n)
                   = RdT $ o' :?: (m :*: n)
  where o' = o /?/ [(x, TSub m)]
reduce (Pop o n)   = RdT $ failure o :*: n
reduce (Get (CoO (At i) o m) j)
  | i == j         = RdT $ o :?: (m :@: i)
reduce (Get o i)   = RdT $ failure o :@: i
reduce (FreeVar x) = UnknownA x

data RxResponse i a
  = FreeCoVar a
  | Reset (Term i a) (Question i a)
  | Shift a (Response i a) (Question i a)
  | Under (CoObject i a)

data RdResponse i a
  = RdR (Response i a)
  | UnknownQ a

handle :: Eq a => RxResponse i a
       -> RdResponse i a
handle (FreeCoVar k) = UnknownQ k
handle (Reset m q)   = RdR $ m`ask`q:!:End
handle (Shift k r q) = RdR $ r/!/[(k, subQ q)]
handle (Under o)     = RdR $ failure o:!:End

subQ :: Question i a -> TRSub i a
subQ k = RSub (Raise `ask` k :!: End)
\end{haskell}
\caption{Functional small-step reduction of copatterns with delimited control.}
\label{fig:nest-reduce-code}
\end{figure}

\begin{figure}
\centering
\renewcommand{\basicstylesize}{\footnotesize}
\begin{haskell}
data Delimit i a
  = Around (RxTerm i a) (Question i a) [Term i a]
  | Caught (RxResponse i a) [Term i a]
  | Uncaught (Question i a)

delimit :: Response i a -> Delimit i a
delimit r = delim r []

unwind :: [Term i a] -> Response i a
       -> Response i a
unwind []    r = r
unwind (m:s) r = m :!: r

delim :: Response i a -> [Term i a]
      -> Delimit i a
delim (m :!: r) s = delim r (m : s)
delim (End) (m:s) = catch (refocus m Nop) s
delim (End) []    = Uncaught Nop
delim (Splat k) s = Caught (FreeCoVar k) s

catch :: Decomp i a -> [Term i a] -> Delimit i a
catch (Internal r q) s = Around r q s
catch (External r)   s = Caught r s
catch (Raised q) (m:s) = Caught (Reset m q) s
catch (Raised q) []    = Uncaught q

data Decomp i a
  = Internal (RxTerm i a) (Question i a)
  | External (RxResponse i a)
  | Raised   (Question i a)

decomp :: Term i a -> Decomp i a
decomp m = refocus m Nop

recomp :: Question i a -> Term i a -> Term i a
recomp q m = m `ask` q

refocus :: Term i a -> Question i a -> Decomp i a
refocus (m :*: n)  k = refocus m (n :* k)
refocus (m :@: i)  k = refocus m (i :@ k) 
refocus (o :?: m)  k = decide (consider o m) k
refocus (Dot m)    k = Internal (Introspect m) k
refocus (q :!-> r) k = External (Shift q r k)
refocus (Raise)    k = Raised k
refocus (Var x)    k = Internal (FreeVar x) k

data Consider i a
  = Inward a (Term i a) (Term i a)
  | Outward (CoObject i a)

only :: Option i a -> Consider i a
only o = consider o Raise

consider :: Option i a -> Term i a
         -> Consider i a
consider (x :?-> n) m = Inward x n m
consider (x :*-> o) m = Outward $ CoO (Arg x) o m
consider (i :@-> o) m = Outward $ CoO (At i) o m

decide :: Consider i a -> Question i a
       -> Decomp i a
decide (Outward o)    = comatch o
decide (Inward x n m) = Internal (Try x n m)

comatch :: CoObject i a -> Question i a
        -> Decomp i a
comatch o (n :* k) = Internal (Pop o n) k
comatch o (j :@ k) = Internal (Get o j) k
comatch o Nop      = External $ Under o
\end{haskell}
\caption{Decomposition of terms with delimited questions.}
\label{fig:nest-decomp-code}
\end{figure}

\begin{figure}
\centering
\renewcommand{\basicstylesize}{\footnotesize}
\begin{haskell}
data Answer i a
  = Final   (Question i a)
  | Stuck   [Term i a] a (Question i a)
  | CoStuck [Term i a] a

try :: (Eq i, Eq a) => Option i a -> Answer i a
try o = eval $ o :?: Raise

eval :: (Eq i, Eq a) => Term i a -> Answer i a
eval m = run $ m :!: End

run :: (Eq a, Eq i) => Response i a -> Answer i a
run r = case delimit r of
  Around r q s -> case reduce r of
    UnknownA x -> Stuck s x q
    RdT m      -> run $ unwind s $
                  recomp q m :!: End
  Caught r s   -> case handle r of
    UnknownQ k -> CoStuck s k
    RdR r      -> run $ unwind s r
  Uncaught q   -> Final q
\end{haskell}
\caption{Direct-style, delimited small-step interpreter.}
\label{fig:nest-small-step-code}
\end{figure}

The hardest step of the journey is to derive a small-step operational semantics
from the abstract machine.  This requires undoing several steps (fusion,
deforesting) which destroy information.  However, having already completed the
easier direction for a similar calculus, we have the advantage of knowing
something about the overall structures we should be looking for.  First,
identifying which steps are associated with reduction (the ones that can delete
or duplicate information), we can factor out a non-recursive \hs|reduce|
function that turns redexes into reducts on \hs|Term|s, along with a similar
\hs|handle| function on \hs|Response|s, shown in \cref{fig:nest-reduce-code}.
Next, by identifying the remaining steps that are purely refocusing (the ones
that are perfectly reversible), we can factor out a set of decomposition
functions that work through delimiters as shown in \cref{fig:nest-decomp-code}.
Finally, we can de-optimize refocusing in terms of decompose-recompose, and
catching delimited responses in terms of delimit-unwind.  This gives us the
direct-style, small-step operational interpreter shown in
\cref{fig:nest-small-step-code}.  The Haskell implementation can be
reinterpreted as the following corresponding small-step reduction relation:
\begin{alignat*}{3}
  \mathit{CoObj} &\ni{}& P
  &::= x \to O
  \mid X \to O
  \\
  % \mathit{Delimiter} &\ni{}& d
  % &::= q
  % \mid \varepsilon
  % \\
  \mathit{DelimCxt} &\ni{}& D
  &::= \hole
  \mid M \ans D
  % \mid E \ans \varepsilon
  \\
  \mathit{EvalCxt} &\ni{}& E
  &::= \hole
  \mid E ~ N
  \mid E ~ X
\end{alignat*}
\begin{gather*}
  \infer%[\mathit{Eval}]
  {E[M] \mapsto E[M']}
  {M \mapsto M'}
  \qquad
  \infer%[\mathit{Delim}]
  {M \ans \varepsilon \mapsto M' \ans \varepsilon}
  {M \mapsto M'}
  \qquad
  \infer%[\mathit{Handle}]
  {D[R] \mapsto D[R']}
  {R \mapsto R'}
\end{gather*}
\begin{align*}
  (\ask x \to N) \ask M
  &\mapsto
  N\subst{x}{M}
  \\
  ((x \to O) \ask M) ~ N
  &\mapsto
  O\subst{x}{N} \ask (M ~ N)
  \\
  ((X \to O) \ask M) ~ X
  &\mapsto
  O \ask (M ~ X)
  \\
  (P \ask M) ~ X
  &\mapsto
  M ~ X
  &(\text{otherwise})
  \\
  (P \ask M) ~ N
  &\mapsto
  M ~ N
  &(\text{otherwise})
  \\\\
  E[\ans k \to R] \ans \varepsilon
  &\mapsto
  R\subst{k}{(E[\Raise] \ans \varepsilon)}
  \\
  M \ans (E[\Raise] \ans \varepsilon)
  &\mapsto
  E[M] \ans \varepsilon
  \\
  (P \ask M) \ans \varepsilon
  &\mapsto
  M \ans \varepsilon
\end{align*}

Again, we get an analogous correspondence from the semantic derivations as
before in \cref{thm:block-semantic-correspondence}.  Delimited questions and
$\Raise$ make it more clear that the observable results of responses are
unanswered questions.  For simplicity, we focus on non-empty questions which
must be explicitly $\Raise$d.

\begin{theorem}
\label{thm:nest-semantic-correspondence}

The three \hs|eval| functions are equal, \ie the following relations between $R$
and $K \neq \varepsilon$ are all equivalent:
\begin{enumerate}[(a)]
\item $R \mapsto^* K[\Raise]\ans$.
\item $\cut{R}{\varepsilon} \mapsto^* \braket{{\Raise} \cmid K \cmid \varepsilon}$.
\item $\den{R}(\lambda k. k) \mapsto^* \den{K}$.
\end{enumerate}
\end{theorem}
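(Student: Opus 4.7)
The plan is to follow the same correct-by-construction strategy used for \cref{thm:block-semantic-correspondence}, but now traversed in the opposite direction: since each refactoring between the CPS of \cref{fig:nest-cps-code}, the abstract machine of \cref{fig:nest-machine-code}, and the small-step semantics of \cref{fig:nest-small-step-code} was obtained via a standard, meaning-preserving program transformation, I expect to reduce the theorem to a chain of equivalences between intermediate functional interpreters, each of which can be established by a straightforward induction. The main targets are (c)$\Leftrightarrow$(b) and (b)$\Leftrightarrow$(a), which together yield (a)$\Leftrightarrow$(c).

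For (c)$\Leftrightarrow$(b), I would argue that the abstract machine is what one gets by defunctionalizing the continuations of the CPS, $\eta$-expanding the transformation function, and then fusing the translation step with evaluation. Concretely, I would exhibit the bijection between semantic continuations and syntactic \texttt{Cont}s, $\texttt{MetaCont} \leftrightarrow \mathit{MetaCont}$, via $\den{\cdot}$ extended pointwise, and prove by induction on the source response (resp.\ term, option) that $\den{R}\,s$ reduces in the CPS to the same \texttt{Answer} as $\braket{R \cmid S}$ in the machine, where $s$ and $S$ are related by the defunctionalization bijection. The key lemma is a refocusing invariant $\den{M}\,k\,s = \braket{M \cmid K \cmid S}$ whose inductive cases mirror the refocusing, reduction, and copattern-matching clauses of the machine one-for-one; the substitution-reassociation reasoning from \cref{thm:block-subst-reassoc} handles the fused substitution in the \texttt{comatch} cases, and the delimited-response cases correspond to the $\ans q \to R$ and $\Raise$ rules.

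For (b)$\Leftrightarrow$(a), I would follow the inverse direction used to derive \cref{fig:nest-small-step-code}: every machine transition is either a refocusing step (which is exactly the composition $\mathit{decomp} \circ \mathit{recomp}$ applied in place, justified by an analogue of \cref{thm:block-refocusing} lifted to \texttt{Response}s, \texttt{Term}s, and \texttt{CoObject}s) or a genuine reduction/handling step (corresponding to a single clause of the reduction relation together with its context closure by $E$, $D$, or $M \ans \hole$). The refocusing invariant here is a pair of statements $\mathit{decomp}(E[M]) = \mathit{refocus}\,M\,E$ and $\mathit{delim}(D[R], []) = \mathit{delim}(R, \overline{D})$ where $\overline{D}$ is the metacontinuation obtained by reading the delimiter stack outside-in; these are proven by induction on $E$ and $D$. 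From this one concludes that each small-step reduction $R \mapsto R'$ is matched by a run of zero-or-more refocusing machine steps followed by exactly one reduction-or-handle machine step that lands in the machine configuration corresponding to $R'$, and the final observation shapes---$K[\Raise]\ans$, $\braket{\Raise \cmid K \cmid \varepsilon}$, and $\den{K}$---coincide under the same correspondence.

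The main obstacle is the (b)$\Leftrightarrow$(a) direction, because passing from the machine to direct style undoes information-destroying fusion: the machine's copattern-matching loop has absorbed both the decomposition of the spine around an $O \ask M$ and the progressive substitution into the option, so one must identify the precise factorization into \texttt{consider}, \texttt{decide}, \texttt{comatch}, and \texttt{reduce} that re-exposes a single-step structure. In particular the ``fall-through'' case $(P \ask M)\,N \mapsto M\,N$ is subtle: it arises in the machine as a chain of \texttt{comatch} steps that conclude with a \texttt{comatch o m k} default clause, and one must verify that the small-step closure under $E$ reproduces exactly the same cascade. Once this factorization is justified, the remaining cases---delimited control via $\ans q \to R$, the interaction of $\Raise$ with the metacontinuation, and the \texttt{Under}/\texttt{CoStuck} terminal shapes---are largely bookkeeping, and the equivalence of terminal observations in part (a)--(c) follows by inspecting the corresponding \texttt{Answer} constructors under the above invariants.
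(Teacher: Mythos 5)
Your proposal is correct and takes essentially the same route as the paper: the theorem is justified by construction, with the CPS, abstract machine, and small-step semantics related by exactly the transformation chain you describe---defunctionalization, delayed application of the translation, and fusion for (c)$\Leftrightarrow$(b), and un-fusing via the decompose--recompose and delimit--unwind de-optimizations for (b)$\Leftrightarrow$(a). You also correctly identify the machine-to-direct-style direction as the information-destroying and therefore hardest step, which is precisely where the paper's derivation (serving as its proof) concentrates its effort.
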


\section{Related Work}
\label{sec:related-work}

\subsubsection*{Copatterns}
Our starting point comes from a macro implementation in Scheme and Racket
\cite{CoScheme}, where we are primarily concerned with specifying the behavior
of different dimensions of compositionality.  Alternative macro implementations
of copatterns have been given for OCaml \cite{LaforgueR17,jeannin_cocaml_2017},
which leverage different restrictions to aid code generation.  Copatterns have
also seen use in proof assistants like Agda
\cite{ElaboratingDependentCopatterns} which use a type-driven approach to
elaboration \cite{UnnestingCopatterns,ThibodeauMasters}.

\subsubsection*{Functional and syntactic correspondence}
The functional and syntactic correspondence between semantic artifacts
\cite{FunctionalCorrespondence,SyntacticCorrespondence,SmallStepBigStepMachines,DefunctionalizedInterpreters,InterDerivingSemanticArtifactsOOP,WalkInTheSemanticPark}
is based on the approach of definitional interpreters
\cite{DefinitionalInterpreters} and a long history of semantics-preserving
program transformations.  It has been especially useful for studying the
semantics of call-by-need evaluation
\cite{DefunctionalizedInterpretersCBNeed,InterDerivingSmallStepBigStepCBNeed,SyntheicOperationalCBNeed}
and its connection to sequent \cite{ADHNS2012CCSC} and process calculi
\cite{DMAV2014CPS}.

\subsubsection*{Delimited control}
The CPS semantics of delimited copattern matching is similar to delimited
control, specifically shift and reset
\cite{FunctionalAbstractionTypedContexts,AbstractingControl}.  We use a
call-by-name semantics for a close connection between evaluation contexts and
copatterns.  A similar approach to call-by-name delimited control
\cite{HerbelinG08} is related to shift0 \cite{DownenAriola2014CSCC}, a powerful
variant of shift \cite{materzok2011subtyping,materzok2012dynamic}.

\section{Conclusion}
\label{sec:conclusion}

Now at the end of our round-trip journey, the disciplined approach to deriving
semantic artifacts has been a powerful methodology for understanding complex
programming features.  The ability to generate CPS transformations is especially
useful to explore and refactor the language design space, and coming back gives
tools to understand source programs directly.  These artifacts still have
untapped potential to explore for understanding copattern and program
composition, such as extracting a type system from the CPS~\cite{FunctionalAbstractionTypedContexts}.

%%
%% The acknowledgments section is defined using the "acks" environment
%% (and NOT an unnumbered section). This ensures the proper
%% identification of the section in the article metadata, and the
%% consistent spelling of the heading.
\begin{acks}
  I want to thank Olivier Danvy for so generously devoting his time,
  encouragement, and teaching while I was still an early Ph.D. student.  While
  on an extended visit to the University of Oregon, Olivier gave a week-long
  hands-on tutorial on his technique for inter-deriving semantics.  Soon
  thereafter, I realized this was the perfect solution to a difficult problem on
  the semantics of the sequent calculus that had been lingering for over a year,
  which directly led to my second publication~\cite{ADHNS2012CCSC}.  Now, many
  years later, I had been puzzling over a similar problem of trying to capture
  the direct-style operational semantics for composing copatterns, which eluded
  me for quite some time.  Only after thinking of how I could contribute to
  OlivierFest, did I realize ``Aha!  This is the perfect problem to fix with the
  technique Olivier taught me!''  This paper is a celebration and revival of
  that early influence.

  This material is based upon work supported by the National Science Foundation
  under Grant No. 2245516.
\end{acks}

% \clearpage

%%
%% The next two lines define the bibliography style to be used, and
%% the bibliography file.
\bibliographystyle{ACM-Reference-Format}
\bibliography{refs}

%%
%% If your work has an appendix, this is the place to put it.
% \clearpage
\appendix

\section{Postscript: Efficiently Passing Environments}
\label{sec:environment-machine}

%% Help LaTeX place figures effectively by allowing them all as early as possible
\begin{figure}
\centering
\renewcommand{\basicstylesize}{\footnotesize}
\begin{haskell}
type ClosEnv i a = Env a (Closure i a)
type ClosQuestion i a
  = Copattern i (Closure i a)
data Closure i a
  = (:/:) { openTerm  :: Term i a,
            staticEnv :: ClosEnv i a }

data Redex i a
  = Introspect (Term i a) (ClosEnv i a)
  | Respond [Option i a] (ClosEnv i a)
  | FreeVar a (ClosEnv i a)

data Reduct i a
  = Reduced (Closure i a)
  | Unhandled
  | Unknown a

reduce :: (Eq i, Eq a)
       => Redex i a -> ClosQuestion i a
       -> Followup i a
reduce (Introspect m env) q
  = Next (Reduced $ m :*: m :/: env) q
reduce (FreeVar x env)    q
  = case lookup x env of
  Nothing -> Next (Unknown x) q
  Just m  -> Next (Reduced m) q
reduce (Respond (lhs :-> rhs : ops) env) q
  = case suffix match of
      Followup q'  ->
        Next (Reduced $ rhs:/:env'++env) q'
      Unasked lhs' ->
        More lhs' (rhs:/:env') ops env q
      Mismatch _ _ ->
        reduce (Respond ops env) q
  where match = comatch lhs q
        env'  = prefix match
reduce (Respond [] env)   q
  = Next Unhandled q

data Decomp i a
  = Asked (Redex i a) (ClosQuestion i a)

decomp  :: Closure i a -> Decomp i a
recomp  :: Term i a -> Question i a -> Term i a
refocus :: Closure i a -> ClosQuestion i a
        -> Decomp i a

eval :: (Eq a, Eq i) => Term i a -> Answer i a
eval m = iter $ decomp (m :/: [])

iter :: (Eq a, Eq i) => Decomp i a -> Answer i a
iter (Asked r q) = case reduce r q of
  Next (Reduced m)     k -> iter $ refocus m k
  Next (Unknown x)     k -> Stuck x k
  Next Unhandled       k -> Raise k
  More lhs rhs ops env k -> Under lhs rhs ops env k
\end{haskell}
\caption{Small-step reduction with an environment}
\label{fig:block-env-reduce-code}
\end{figure}

\begin{figure}
\centering
\renewcommand{\basicstylesize}{\footnotesize}
\begin{haskell}
data Answer i a
  = Under (Copattern i a) (Closure i a)
          [Option i a] (ClosEnv i a)
          (ClosQuestion i a)
  | Raise (ClosQuestion i a)
  | Stuck a (ClosQuestion i a)

eval :: (Eq a, Eq i) => Term i a -> Answer i a
eval m = refocus m [] Nop

refocus :: (Eq a, Eq i) => Term i a
        -> ClosEnv i a -> ClosQuestion i a
        -> Answer i a
refocus (Var x)   env k = case lookup x env of
  Nothing -> Stuck x k
  Just (m :/: env)  -> refocus m env k
refocus (Dot m)   env k
  = refocus m env $ (m :/: env) :* k
refocus (Obj os)  env k = case os of
  lhs :-> rhs : os -> comatch lhs k [] rhs os env k
  []               -> Raise k
refocus (m :*: n) env k
  = refocus m env $ (n :/: env) :* k
refocus (m :@: i) env k
  = refocus m env $ i :@ k

comatch :: (Eq a, Eq i) => Copattern i a
        -> ClosQuestion i a -> ClosEnv i a
        -> Term i a -> [Option i a] -> ClosEnv i a
        -> ClosQuestion i a
        -> Answer i a
comatch Nop      cxt      env' rhs _  env _
  = refocus rhs (env' ++ env) cxt
comatch lhs      Nop      env' rhs os env q
  = Under lhs (rhs :/: env') os env q
comatch (x:*lhs) (y:*cxt) env' rhs os env q
  = comatch lhs cxt ((x,y):env') rhs os env q
comatch (i:@lhs) (j:@cxt) env' rhs os env q
  | i == j = comatch lhs cxt env' rhs os env q
comatch lhs      cxt      _    _   os env q
  = refocus (Obj os) env q
\end{haskell}
\caption{Environment-passing, tail-recursive abstract machine interpreter.}
\label{fig:block-env-machine-code}
\end{figure}

\begin{figure}
\centering
\renewcommand{\basicstylesize}{\footnotesize}
\begin{haskell}
data Answer i a
  = Final   (CPSQuestion i a)
  | Stuck   [CPSTerm i a] a (CPSQuestion i a)
  | CoStuck [CPSTerm i a] a

type CPSQuestion i a = Copattern i (CPSArg i a)
type CPSResponse i a = Answer i a
type CPSTerm i a = CPSQuestion i a -> CPSResponse i a
type CPSOption i a = CPSTerm i a -> CPSTerm i a

newtype CPSArg i a
  = Arg { useArg :: CPSTerm i a }

data CPSSub i a = CPST (CPSTerm i a)
                | CPSQ (CPSQuestion i a)
type CPSEnv i a = Env a (CPSSub i a)

run :: (Eq a, Eq i) => Response i a -> Answer i a
run r = (response r [])

eval :: (Eq i, Eq a) => Term i a
     -> Answer i a
eval m = (term m []) Nop

try :: (Eq i, Eq a) => Option i a -> Answer i a
try o = (option o []) Nop (term Raise []) Nop

response :: (Eq a, Eq i) => Response i a
         -> CPSEnv i a -> Answer i a
response (Splat k) env = case lookup k env of
  Just (CPSQ q) -> Final q
  _             -> CoStuck [] k 
response (End)     env = Final Nop
response (m :!: r) env
  = (term m env) <!> (response r env)

(<!>) :: CPSTerm i a -> Answer i a
      -> Answer i a
f <!> Final r      = f r
f <!> Stuck gs x q = Stuck (f : gs) x q
f <!> CoStuck gs q = CoStuck (f : gs) q

term :: (Eq a, Eq i) => Term i a -> CPSEnv i a
     -> CPSTerm i a
term (Var x)   env = case lookup x env of
  Just (CPST m) -> m
  _             -> Stuck [] x
term (Dot m)    env
  = \k -> (term m env) (Arg (term m env) :* k)
term (m :*: n)  env
  = \k -> (term m env) (Arg (term n env) :* k)
term (m :@: i)  env = \k -> (term m env) (i :@ k)
term (Raise)    env = \k -> Final k
term (q :!-> r) env
  = \k -> (response r ((q, CPSQ k) : env))
term (o :?: m)  env
  = \k -> (option o env) k (term m env) k

option :: (Eq i, Eq a) => Option i a -> CPSEnv i a
       -> CPSQuestion i a -> CPSOption i a
option (x :*-> o) env = \q f -> \case
  (y :* k) -> (option o env') q f k
    where env' = (x, CPST (useArg y)) : env
  _        -> f q
option (i :@-> o) env = \q f -> \case
  (j :@ k) | i == j -> (option o env) q f k
  _                 -> f q
option (x :?-> m) env = \_ f -> (term m env')
  where env' = (x, CPST f) : env
\end{haskell}
\caption{Environment and continuation-passing style translation for copatterns with nested options.}
\label{fig:nest-env-cps-code}
\end{figure}

\begin{figure}
\centering
\renewcommand{\basicstylesize}{\footnotesize}
\begin{haskell}
data Answer i a
  = Final   (ClosQuestion i a)
  | Stuck   (MetaCont i a) a (ClosQuestion i a)
  | CoStuck (MetaCont i a) a

type MetaCont i a = [Closure i a]

run :: (Eq a, Eq i) => Response i a -> Answer i a
run r = delim r [] []

eval :: (Eq i, Eq a) => Term i a -> Answer i a
eval m = refocus m [] Nop []

try :: (Eq i, Eq a) => Option i a -> Answer i a
try o = comatch o [] Nop (Raise :/: []) Nop []

delim :: (Eq a, Eq i)
      => Response i a -> ClosEnv i a
      -> MetaCont i a -> Answer i a
delim (Splat k) env (m:/:e:s)
  | Just (QSub q) <- lookup k env
  = refocus m e Nop s
delim (Splat k) env []
  | Just (QSub q) <- lookup k env
  = Final q
delim (Splat k) env s
  = CoStuck s k
delim (End)     env (m:/:e:s)
  = refocus m e Nop s
delim (End)     env []
  = Final Nop
delim (m :!: r) env s
  = delim r env $ (m :/: env) : s

refocus :: (Eq a, Eq i) => Term i a
        -> ClosEnv i a -> ClosQuestion i a
        -> MetaCont i a -> Answer i a
refocus (Var x)    env k s
  | Just (CSub (m:/:e)) <- lookup x env
  = refocus m e k s
refocus (Var x)    env k s
  = Stuck s x k
refocus (Dot m)    env k s
  = refocus m env ((m :/: env) :* k) s
refocus (m :*: n)  env k s
  = refocus m env ((n :/: env) :* k) s
refocus (m :@: i)  env k s
  = refocus m env (i :@ k) s
refocus (Raise)    env k (m:s)
  = refocus (openTerm m) (staticEnv m) k s
refocus (Raise)    env k []
  = Final k
refocus (q :!-> r) env k s
  = delim r ((q, QSub k) : env) s
refocus (o :?: m)  env k s
  = comatch o env k (m :/: env) k s

comatch :: (Eq i, Eq a) => Option i a
        -> ClosEnv i a -> ClosQuestion i a
        -> Closure i a -> ClosQuestion i a
        -> MetaCont i a -> Answer i a
comatch (x :*-> o) env (n:*k) m q s
  = comatch o ((x, CSub n) : env) q m k s
comatch (i :@-> o) env (j:@k) m q s
  | i == j = comatch o env q m k s
comatch (x :?-> n) env k      m _ s
  = refocus n ((x, CSub m) : env) k s
comatch _          env _      m q s
  = refocus (openTerm m) (staticEnv m) q s
\end{haskell}
\caption{Environment-passing, abstract machine interpreter for copatterns with control.}
\label{fig:nest-env-machine-code}
\end{figure}

%% The full-page figures break the flow unless they are put together at the end

\begin{figure*}
\centering
Refocusing / Reduction steps:
\begin{align*}
  \braket{M~N \cmid \sigma \cmid K}
  &\mapsto
  \braket{M \cmid \sigma \cmid N\clos{\sigma} ~ K}
  &
  \braket{M. \cmid \sigma \cmid K}
  &\mapsto
  \braket{M \cmid \sigma \cmid M\clos{\sigma} ~ K}
  &
  \braket{x \cmid \sigma \cmid K}
  &\mapsto
  \braket{M \cmid \sigma' \cmid K}
  \\
  \braket{M~X \cmid \sigma \cmid K}
  &\mapsto
  \braket{M \cmid \sigma \cmid X ~ K}
  &
  \braket{\lambda\{L \to M; \vect{O}\} \cmid \sigma \cmid K}
  &\mapsto
  \braket{L \cmid K \cmid \sigma \cmid M \cmid \vect{O} \cmid \sigma \cmid K}
  &
  &~(\asub{x}{M\clos{\sigma'}} \in \sigma)
\end{align*}

Copattern-matching steps:
\begin{align*}
  \braket{x ~ L \cmid N\clos{\sigma'} ~ K' \cmid \sigma \cmid M \cmid \many{O} \cmid \sigma_0 \cmid K}
  &\mapsto
  \braket{L \cmid K' \cmid \asub{x}{N\clos{\sigma'}},\sigma \cmid M \cmid \many{O} \cmid \sigma_0 \cmid K}
  \\
  \braket{X ~ L \cmid X ~ K' \cmid \sigma \cmid M \cmid \many{O} \cmid \sigma_0 \cmid K}
  &\mapsto
  \braket{L \cmid K' \cmid \sigma \cmid M \cmid \many{O} \cmid \sigma_0 \cmid K}
  \\
  \braket{\varepsilon \cmid K' \cmid \sigma \cmid M \cmid \many{O} \cmid \sigma_0 \cmid K}
  &\mapsto
  \braket{M \cmid \sigma \cmid K'}
  \\
  \braket{L \cmid \varepsilon \cmid \sigma \cmid M \cmid \many{O} \cmid \sigma_0 \cmid K}
  &\not\mapsto
  \qquad\qquad\qquad\qquad\qquad
  (\text{if } L \neq \varepsilon)
  \\
  \braket{L \cmid K' \cmid \sigma \cmid M \cmid \many{O} \cmid \sigma_0 \cmid K}
  &\mapsto
  \braket{\lambda\{\many{O}\} \cmid \sigma_0 \cmid K}
  \qquad(\text{otherwise})
\end{align*}
\caption{Environment-based abstract machine for calculating monolithic copatterns.}
\label{fig:env-machine}
\end{figure*}

\begin{figure*}
\centering

Meta-continuation steps:
\begin{align*}
  \braket{M \ans R \cmid \sigma \cmid S}
  &\mapsto
  \braket{R \cmid \sigma \cmid M\clos{\sigma}; S}
  &
  \braket{\varepsilon \cmid \sigma \cmid M\clos{\sigma'}; S}
  &\mapsto
  \braket{M \cmid \sigma' \cmid \varepsilon \cmid S}
  &
  \braket{q \cmid \sigma \cmid M\clos{\sigma'}; S}
  &\mapsto
  \braket{M \cmid \sigma' \cmid \sigma(q) \cmid S}
\end{align*}

Refocusing / reduction steps:
\begin{align*}
  \braket{M ~ X \cmid \sigma \cmid K \cmid S}
  &\mapsto
  \braket{M \cmid \sigma \cmid X ~ K \cmid S}
  \\
  \braket{M ~ N \cmid \sigma \cmid K \cmid S}
  &\mapsto
  \braket{M \cmid \sigma \cmid N\clos{\sigma} ~ K \cmid S}
  &\qquad
  \braket{\ans q \to R \cmid \sigma \cmid K \cmid S}
  &\mapsto
  \braket{R \cmid \asub{q}{K},\sigma \cmid S}
  \\
  \braket{M. \cmid \sigma \cmid K \cmid S}
  &\mapsto
  \braket{M \cmid \sigma \cmid M\clos{\sigma} ~ K \cmid S}
  &\qquad
  \braket{O \ask M \cmid \sigma \cmid K \cmid S}
  &\mapsto
  \braket{O \cmid \sigma \cmid K \cmid M \clos\sigma \cmid K \cmid S}
\end{align*}

Copattern-matching steps:
\begin{align*}
  \braket{x \to O \cmid \sigma \cmid N\clos{\sigma'}~K \cmid M \clos{\sigma_0} \cmid K_0 \cmid S}
  &\mapsto
  \braket{O \cmid \asub{x}{N\clos{\sigma'}}, \sigma \cmid K \cmid M \clos{\sigma_0} \cmid K_0 \cmid S}
  \\
  \braket{X \to O \cmid \sigma \cmid X~K \cmid M \clos{\sigma_0} \cmid K_0 \cmid S}
  &\mapsto
  \braket{O \cmid \sigma \cmid K \cmid M\clos{\sigma_0} \cmid K_0 \cmid S}
  \\
  \braket{\ask x \to N \cmid \sigma \cmid K \cmid M\clos{\sigma_0} \cmid K_0 \cmid S}
  &\mapsto
  \braket{N \cmid \asub{x}{M\clos{\sigma_0}}, \sigma \cmid K \cmid S}
  \\
  \braket{O \cmid \sigma \cmid K \cmid M\clos{\sigma_0} \cmid K_0 \cmid S}
  &\mapsto
  \braket{M \cmid \sigma_0 \cmid K_0 \cmid S}
  \quad(\text{otherwise})
\end{align*}
\caption{Environment-based abstract machine for controlling compositional copatterns.}
\label{fig:comp-copat-machine}
\end{figure*}

The abstract machines derived in \cref{sec:derive-copat,sec:control-copat} are
based on substitution, which is a correct but notoriously slow implementation of
static binding.  A more efficient implementation technique is to explicitly
thread environments through the machine states and form closures when necessary
to correctly implement static scope.  These kind of environment-based
implementations are standard practice, but correctly managing static scope
through closures can be tricky, its correctness is not as obvious.

In the context of the derivations we have done so far, we could treat addition
of environments and closures to an abstract machine as a complex monolithic
program transformation.  Instead, here we stay within the incremental style, and
perform a small, but obvious transformation at the right level of abstraction
that makes environment-passing straightforward.  Then, turning the crank in the
same way as before will mechanically generate a more efficient abstract machine
with more confidence that it is correct by construction.

\subsection{Closing over monolithic copatterns}
\label{sec:closing-monolithic-copat}

In order to thread environments efficiently, we start from the very beginning
with the small-step semantics.  The main change takes place in the \hs|reduce|
function as shown in \cref{fig:block-env-reduce-code}: the \hs|Redex| it
processes will now contain an explicit environment representing some delayed
substitutions that haven't been finished yet, and its \hs|Reduct| can now return
a \hs|Closure| (pair of an open term and static environment) with potentially
more delayed substitutions.

The reasoning behind why this program transformation is correct with respect to
\cref{fig:block-reduce-code} is that, if we eagerly perform all delayed
substitutions before and after the environment-passing \hs|reduce| step, it is
the same as the substitution-based \hs|reduce| step.  Since \hs|reduce| is a
non-recursive stepping function, this property can be manually confirmed by
manually checking each case.

Since the new \hs|Redex| type now contains closures, we also have to update the
decomposition functions \hs|decomp| and \hs|refocus|.  These now start with
explicit closures and search for the next redex---which follows exactly the
same code structure before, since the search never goes under binders---which
produces a redex with explicit substitution and a question containing closures in
place of raw terms.

Putting this all together, we then get the environment-passing, small step
evaluator \hs|eval| and main driver loop \hs|iter| shown in
\cref{fig:block-env-reduce-code}---already in the in-place refocusing
form---which corresponds to the original small-step interpreter up to performing
the delayed substitutions.  The main correctness property about the top-level
\hs|eval| function can be derived from each step of \hs|iter| by relating the
above relationship of \hs|reduce| and \hs|refocus|.

From here on out, there is nothing new.  Applying the same program
transformations as before---CPS transformation, defunctionalization, loop
fusion, compressing corridor transitions, deforesting, and other
representational data structure changes---yields the environment-passing,
tail-recursive interpreter in \cref{fig:block-env-machine-code}.

We can continue on to derive a continuation-passing style transformation like
before as well, using the same transformation steps---desugaring pattern
matching, $\eta$-reduction, and immediately applying transition functions to all
sub-expressions as soon as they are available.  The resulting code corresponds
to a form of CPS transformation that is parameterized by a static environment
that gets used to interpret both free and bound variables, in the style of many
denotational semantics.  Rephrased as a translation function into the
$\lambda$-calculus, this CPS is as follows:
\begin{itemize}
\item Translating terms $\den{M}[\sigma]$:
\begin{align*}
  \den{x}[\sigma] &= \sigma(x)
  \\
  \den{M~X}[\sigma] &= \lambda k.~ \den{M}[\sigma] ~ (X ~ k)
  \\
  \den{M~N}[\sigma] &= \lambda k.~ \den{M}[\sigma] ~ (\den{N}[\sigma], k)
  \\
  \den{M.}[\sigma] &= \lambda k.~ \den{M}[\sigma] ~ (\den{M}[\sigma], k)
  \\
  \den{\lambda\{\many{O}\}}[\sigma] &= \den{\many{O}}[\sigma]
\end{align*}
\item Translating lists of options $\den{\many{O}}[\sigma]$:
\begin{align*}
  \den{\varepsilon}[\sigma] &= \lambda k.~ k
  \\
  \den{L = M \mid \many{O}}[\sigma]
  &=
  \lambda k.~ \den{L \to M}[\sigma] ~ k ~ \den{\many{O}}[\sigma] ~ k
\end{align*}
\item Translating copattern-matching options $\den{L \to M}[\sigma]$:
\begin{align*}
  \den{\varepsilon \to N}[\sigma] &= \lambda q. \lambda f.~ \den{N}[\sigma]
  \\
  \den{x ~ L \to N}[\sigma] &= \Rec r = \lambda q. \lambda f. \lambda k.
  \\[-1ex]
  &\qquad
  \Case k \Of
  \begin{alignedat}[t]{2}
    &(y, k') &&\to \den{L = N}[\subst{y}{x},\sigma] ~ q ~ f ~ k'
    \\[-1ex]
    &() &&\to r ~ q ~ f
    \\[-1ex]
    &k &&\to f ~ q
  \end{alignedat}
  \\
  \den{X ~ L \to N}[\sigma] &= \Rec r = \lambda q. \lambda f. \lambda k.
  \\
  &\qquad
  \Case k \Of
  \begin{alignedat}[t]{2}
    &(X ~ k') &&\to \den{L = N}[\sigma] ~ q ~ f ~ k'
    \\[-1ex]
    &() &&\to r ~ q ~ f
    \\[-1ex]
    &k &&\to f ~ q
  \end{alignedat}
\end{align*}
\end{itemize}
As a convention, when bound names are introduced on the right-hand side of a
defining equation, they are always chosen to be distinct from the free variables
of $\sigma$ to avoid accidental capture.

\subsection{Closing over compositional copatterns}
\label{sec:closing-compositional-copat}

The refactorings used \cref{sec:refactor} to generalize the calculus for
delimited and compositional copattern matching were orthogonal to the question
about substitution versus environments as the semantics for static variables.
Therefore, we can replay the changes to the environment and continuation-passing
transformation in \cref{sec:closing-monolithic-copat} to derive a similar
environment-based CPS translation of compositional copatterns:
\begin{itemize}
\item Translating responses $\den{R}[\sigma]$
\begin{align*}
  \den{M \ans R}[\sigma]
  &=
  \lambda s.~ \den{R}[\sigma] ~ \lambda q.~ \den{M}[\sigma] ~ q ~ s
  \\
  \den{q}[\sigma] &= \lambda s.~ s ~ \sigma(q)
  \\
  \den{\varepsilon}[\sigma] &= \lambda s.~ s ~ ()
\end{align*}

\item Translating terms $\den{M}[\sigma]$
  \begin{align*}
  \den{x}[\sigma] &= \sigma(x)
  \\
  \den{M~X}[\sigma] &= \lambda k.~ \den{M}[\sigma]~(X~k)
  \\
  \den{M~N}[\sigma] &= \lambda k.~ \den{M}[\sigma]~(\den{N}[\sigma], k)
  \\
  \den{M.}[\sigma] &= \lambda k.~ \den{M}[\sigma]~(\den{M}[\sigma], k)
  \\
  \den{\Raise}[\sigma] &= \lambda k.~ \lambda s. s ~ k
  \\
  \den{O \ask M}[\sigma] &= \lambda k.~ \den{O}[\sigma] ~ \den{M}[\sigma] ~ k
  \\
  \den{\ans q \to R}[\sigma] &= \lambda q.~ \den{R}[\sigma]
\end{align*}

\item Translating options $\den{O}[\sigma]$
  \begin{align*}
  \den{x \to O}[\sigma]
  &=
  \lambda f. \lambda k.
  \begin{alignedat}[t]{2}
    &\Case k \Of
    \\[-1ex]
    &\quad
    (x, k') &&\to \den{O}[\sigma] ~ (\lambda q.~ f~(x, q)) ~ k'
    \\[-1ex]
    &\quad
    k &&\to f ~ k
  \end{alignedat}
  \\
  \den{X \to O}[\sigma]
  &=
  \lambda f. \lambda k.
  \begin{alignedat}[t]{2}
    &\Case k \Of
    \\[-1ex]
    &\quad
    (X~k') &&\to \den{O}[\sigma] ~ (\lambda q.~ f~(X~q)) ~ k'
    \\[-1ex]
    &\quad
    k &&\to f ~ k
  \end{alignedat}
  \\
  \den{\ask x \to M}[\sigma] &= \lambda x.~ \den{M}[\sigma]
\end{align*}
\end{itemize}
The corresponding Haskell embedding is shown in \cref{fig:nest-env-cps-code}.

From here on out, we can turn the CPS transformation into an abstract machine
using the same general derivation technique.  Applying standard code
transformations---defunctionalization, delaying the application of translation
functions until the last moment of application, $\eta$-expansion, and the use of
nested pattern matching---gives the environment-passing, tail-recursive
interpreter shown in \cref{fig:nest-env-machine-code}.

To compare the difference of the low-level execution of the two calculi---one
for monolithic matching of complex copatterns, and the other for compositional
matching of copatterns with control---we can put them in more common forms.
Rephrasing the Haskell implementations as stepping relations on machine
configurations for both calculi are shown in
\cref{fig:env-machine,fig:comp-copat-machine}.

%%% Local Variables:
%%% mode: LaTeX
%%% TeX-master: "derive-copat"
%%% End:

\end{document}